\newcommand*\bigcdot{\mathpalette\bigcdot@{1}}
\newcommand*\bigcdot@[2]{\mathbin{\vcenter{\hbox{\scalebox{#2}{$\m@th#1\bullet$}}}}}
\DeclarePairedDelimiter\floor{\lfloor}{\rfloor}
\DeclarePairedDelimiter\abs{\lvert}{\rvert}%
\newcolumntype{P}[1]{>{\centering\arraybackslash}p{#1}}
\newcolumntype{M}[1]{>{\centering\arraybackslash}m{#1}}
\newcommand{\nosemic}{\renewcommand{\@endalgocfline}{\relax}}
\newcommand{\dosemic}{\renewcommand{\@endalgocfline}{\algocf@endline}}
\let\oldnl\nl
\newcommand{\nonl}{\renewcommand{\nl}{\let\nl\oldnl}}
\DeclareMathAlphabet{\mathcalligra}{T1}{calligra}{m}{n}
\newtheorem{proposition}{Proposition}
\newtheorem{theorem}{Theorem}
\DeclareMathOperator*{\argmax}{arg\,max}
\begin{document}
\title{Throughput-based Design for Polar Coded-Modulation}

\author{Hossein~Khoshnevis,~\IEEEmembership{Student Member,~IEEE,}
             Ian~Marsland,~\IEEEmembership{Member,~IEEE,}
            and~Halim~Yanikomeroglu,~\IEEEmembership{Fellow,~IEEE}
\thanks{H. Khoshnevis, I. Marsland, and H. Yanikomeroglu are with the Department
of Systems and Computer Engineering, Carleton University, Ottawa, ON, Canada
(e-mail: \{khoshnevis, ianm, halim\}@sce.carleton.ca). This work is supported by Huawei Canada Co., Ltd.  This paper was presented in part in \textit{IEEE VTC-Fall} 2017.}}


\maketitle
\begin{abstract}
Typically, forward error correction (FEC) codes are designed based on the minimization of the error rate for a given code rate. However, for applications that incorporate hybrid automatic repeat request (HARQ) protocol and adaptive modulation and coding, the throughput is a more important performance metric than the error rate. Polar codes, a new class of  FEC codes with simple rate matching, can be optimized efficiently for maximization of the throughput. In this paper, we aim to design HARQ schemes using multilevel polar coded-modulation (MLPCM). Thus, we first develop a method to determine a set-partitioning based bit-to-symbol mapping for high order QAM constellations. We simplify the LLR estimation of set-partitioned QAM constellations for a multistage decoder, and we introduce a set of algorithms to design throughput-maximizing MLPCM for the successive cancellation decoding (SCD). These codes are specifically useful for non-combining (NC) and Chase-combining (CC) HARQ protocols. Furthermore, since optimized codes for SCD are not optimal for SC list decoders (SCLD), we propose a rate matching algorithm to find the best rate for SCLD while using the polar codes optimized for SCD. The resulting codes provide throughput close to the capacity with low decoding complexity when used with NC or CC HARQ.
\end{abstract}
\begin{IEEEkeywords}
 Throughput-based polar code design, HARQ, Chase-combining, multilevel coding, multistage decoding, low complexity LLR estimation.
\end{IEEEkeywords}

\IEEEpeerreviewmaketitle

\section{Introduction}
\label{sec:Introduction}

\IEEEPARstart{T}{he} time-varying nature of wireless channels requires the use of adaptive modulation and coding (AMC) schemes to achieve high throughput communication. Hybrid automatic repeat request (HARQ) error control protocols can enhance the throughput of AMC especially when high order modulation schemes are used. Typically, the design objective of binary error correction codes for the additive white Gaussian noise (AWGN) channel is to minimize the error rate, for a given code rate and signal-to-noise ratio (SNR). This method of design has been widely used for convolutional codes \cite{Lin1983}, parallel concatenated (turbo) codes \cite{Divsalar1995}, low density parity check codes (LDPC) \cite{MacKay1996} and polar codes \cite{Arikan2009,Tal2013}.  However, to optimize the AMC and HARQ protocols, the throughput is a much more relevant performance metric than error rate, where throughput is defined as the average rate of successful message delivery and indicates how close the performance of a system is to the channel capacity. Designing throughput-optimal codes and coded-modulation typically involves an exhaustive search over a set of code rates and modulations and employs simulation to estimate the throughput \cite{Deng1995}. However, as demonstrated in this paper, this process can be greatly simplified for polar codes. 

Polar codes, introduced by Erdal Arikan in \cite{Arikan2009}, are a new class of forward error correction (FEC) codes that can achieve the capacity of binary-input symmetric-output memoryless channels under successive cancellation decoding (SCD) as their length tends to infinity. However, finite length polar codes can be designed and decoded efficiently to perform close to the channel capacity. Polar codes work based on the concept of channel polarization which simplifies the rate matching. Therefore, they are a strong candidate for designing throughput-maximizing FEC codes. When designing polar codes, a set of information bit-channels must be chosen. To find this information-set, the distribution of the log-likelihood ratios (LLRs) of bit-channels should be determined using simulation \cite{Arikan2009} or density evolution. Trifonov in  \cite{Trifonov2012} showed that the density evolution based design of polar codes in an AWGN channel can be simplified using the Gaussian approximation (GA) of the distribution of  LLRs.

When designing polar codes based on the throughput, the optimal set of information bits to maximize the throughput is chosen. The particular advantage of polar codes that facilitates their optimization for maximizing the throughput is this straightforward design method in comparison to most other modern codes. These polar codes are particularly useful for non-combining (NC)\footnote[1]{Non-combining HARQ is also known as Type-I HARQ.} and Chase-combining (CC) HARQ where, for the retransmission of a failed codeword, the whole codeword should be retransmitted. Furthermore, they can be used for incremental redundancy HARQ (IR HARQ) schemes. Although efficient IR HARQ schemes using polar codes have been proposed (e.g.,\cite{Saber2015}) their scheduling, especially for coded-modulation schemes, is difficult \cite{Liang2017} since transmitting short length redundancy may need an overwhelming amount of control signals. To achieve simple scheduling, we set the retransmission redundancy sizes to be equal to the mother code length. In this case, limited feedback can be employed to achieve high throughput using AMC and HARQ protocols since the polar code graph is fixed and only the information-set should be modified when the SNR changes. Polar codes have also been designed to maximize the throughput of CC HARQ with SCD in \cite{Chen2014} and with SCLD in \cite{Liang2017} based on puncturing for a fixed message length and a BPSK constellation. These methods, given the message length $K$, require a full search over all puncturing lengths from $0$ to $N-K$ and thus require running the GA method for $N-K$ times. However, in our proposed code design method, we only need to run the GA method once for most protocols. Therefore, their design is at least $N-K$ times more complex than our method and, as we will show, there is no advantage in using their algorithms. 

To construct polar coded-modulation, either multilevel coding with multistage decoding (MLC/ MSD) or bit-interleaved-coded-modulation (BICM), two well-known methods of concatenating binary codes with high order modulations, can be employed, but MLC/MSD with a set-partitioning based bit-to-symbol mapping (SPM), due to the clarity of design and the conceptual similarity to channel polarization, outperforms BICM \cite{Seidl2013-1}. In addition, the LLR estimation for MSD is cheaper than for BICM which make it a promising technique for future wireless communication systems. Similar to BPSK modulation, designing multilevel polar coded-modulation (MLPCM) using the puncturing-based search methods in \cite{Liang2017} and \cite{Chen2014}  is hard since they require a full search over all levels of a MLC. To design MLPCM, Trifonov  in \cite{Trifonov2012} suggested using the GA, but the full steps of design have not been presented in the literature. 

In this paper, first the accurate LLR estimation of quadrature amplitude modulation (QAM) with two dimensional (2D) SPM is simplified using a novel device constructed from two independent  pulse amplitude modulation (PAM) constellations for the I-channel and Q-channel, and a linear bit mapping. Then, an approximation is proposed to simplify the LLR estimation of PAM constellations. Using the LLR simplifications, the average LLR of each binary channel of QAM constellations is measured and a set of algorithms for designing MLPCM using GA is proposed that is based on maximizing the throughput, instead of the well-studied objective of minimizing the FER. In these algorithms, we fix the code-length and find the optimal information-set using bounds on the throughput of the coded-modulation schemes. These codes are designed for NC, CC, and IR HARQ schemes with successive cancellation decoders (SCD).  In addition, since polar codes optimized for SCD are suboptimal for SC list decoding (SCLD), introduced in \cite{Tal2015}, a fast rate matching algorithm is proposed to find the code rate corresponding to the maximum throughput for SCLD when used with polar codes designed for SCD.

In particular, first we show that by adapting the MLPCM to SNR, throughput very close to the capacity can be achieved. In this case, CC HARQ does not provide any advantage over NC HARQ. However, when the transmitter is restricted to use a smaller number of codes, CC HARQ outperforms NC HARQ and also provides higher throughput in comparison to BICM-based CC and IR HARQ, constructed in \cite{Elkhami2015}. We also show that when the levels of MLPCM can be decoded independently using HARQ, the throughput is enhanced substantially.

The rest of the paper is organized as follows: The system model and HARQ protocols are described in Section~\ref{sec:sysmodel}, and the simplified LLR approximation method for QAM and PAM constellations with SPM along with a simple method of constructing set-partitioning for QAM constellations is described in Section~\ref{LLRSimple}. The throughput of HARQ protocols as a design metric is explained in Section~\ref{sec:Throughput}. The polar code design methods based on the throughput for a BPSK constellation with SCD are introduced in Section~\ref{polarcodedesign}, and the MLPCM design procedure for QAM constellations is described in Section~\ref{polarcodedesignQAM}. In addition, a rate matching algorithm for SCLD is proposed in Section~\ref{polarcodedesignSCL}. Finally, numerical results and discussions are provided in Section~\ref{sec:num}, and conclusions are presented in Section~\ref{sec:Conclusion}.

\section{System Model}
\label{sec:sysmodel}

The communication system includes a single user transmitter and receiver that use a NC or CC HARQ error control protocol. At the transmitter, a cyclic redundancy check (CRC) sequence of length $L_{\text{CRC}}$ is added to data of length $K$ bits and each $K^{\prime}=K+L_{\text{CRC}}$ bits of the CRC and data are coded using a multilevel polar code of total length $N_{\text{tot}}$ and a total code rate of $R=K^{\prime}/N_{\text{tot}}$. The MLC scheme employs a set of $B$ independent encoders, each with a code length $N=N_{\text{tot}}/B$ and code rate $R_n$, with one encoder for each binary channel corresponding to an address bit of a constellation with a cardinality of $2^B$. After encoding of all levels, each set of code-bits \{$c_1^i,c_2^i,...,c_B^i$\}, for $i=1,...,N$, are modulated using BPSK or QAM and transmitted through an AWGN channel with a noise variance of $\frac{N_0}{2}$ per dimension. Throughout this paper, due to the power inefficiency of rectangular  QAM when $B$ is an odd number, we limit the discussions to even values of $B$. The system model can be written as 
\begin{equation}
\label{sysmodel}
 y_{i,l}= x_{i,l}+w_{i,l},
\end{equation}
where $y_{i,l}$ is the $i^{th}$ received sample in the $l^{th}$ retransmission and $x_{i,l}$ and $w_{i,l}$ represent the transmitted symbol and the noise, respectively. The system has an average SNR of $\gamma=E_s/N_0$ where $E_s=E[|x_{i,l}|^2]$.

In a MSD, the LLR for each level is calculated given the code-bits for the upper-levels \cite{Imai1977}.  To this end, after decoding of each level, the received message word is re-encoded and is fed to the demapper to reduce the LLR estimation ambiguity of the next level since the knowledge about the upper-level bits limits the cardinality of lower-level symbol-sets for the LLR calculation. The LLR calculation at each level can be written as 
\begin{equation}
\label{LLRMSD}
\lambda_{n,i,l}= \ln \frac{ \sum_{x \in \mathcal{X}_{n}^0 } p(y_{i,l} \mid x_{i,l}=x)} {\sum_{x \in \mathcal{X}_{n}^1}  p(y_{i,l} \mid x_{i,l}=x) },
\end{equation}
where $n=1,2,...,B$, $\mathcal{X}_{n}^0$ and $\mathcal{X}_{n}^1$  are the set of all constellation points with zero or one in their $n^{th}$ position given the upper-level code-bits, respectively, and $p(y_{i,l} \mid x_{i,l}=x)$ is the likelihood function of $y_{i,l}$ being received given $x$ is transmitted.

\subsection{HARQ Protocols}
\label{sec:protocols}

In this paper, we design MLPCM for four HARQ protocols. The protocols are divided into two groups,  level-dependent and level-independent, based on the dependency of levels for decoding. 

\noindent \textbf{Level-Dependent HARQ Protocols}

In level-dependent protocols, all levels of a multilevel codeword are decoded and an  ACK is fed back to the transmitter only if all levels are correct. Indeed, all levels of the multilevel code are dependent and all levels are seen as one codeword. Thus, only one CRC for checking the correctness of the codeword is employed.

The LLRs used for each decoding attempt depend on whether NC or CC is used. For NC level-dependent (NC-D) HARQ, the LLRs depend only on the received samples for the latest retransmission of the codeword, so the LLRs are given by (\ref{LLRMSD}). For CC level-dependent (CC-D) HARQ, the LLRs depend on the received samples from all retransmissions, according to 
\begin{equation}
\label{LLRMSDCC}
\lambda_{n,i,L}= \ln \frac{ \sum_{x \in \mathcal{X}_{n}^0} \prod_{l=1}^{L} p(y_{i,l} \mid x_{i,l}=x)} {\sum_{x \in \mathcal{X}_{n}^1} \prod_{l=1}^{L} p(y_{i,l} \mid x_{i,l}=x) },
\end{equation}
where $L$ is the number of retransmissions.

\noindent \textbf{Level-Independent HARQ Protocols}

In level-independent protocols, as proposed in \cite{Luo2003}, independent codewords with their own CRC are transmitted on each level of MLC. For each new codeword, the receiver decodes the codeword of each level independently and checks whether it is a valid codeword. When a codeword of a specific level is invalid, the same codeword is retransmitted during the next transmission on the same level while on  all other upper and lower levels, new codewords containing new messages are transmitted. In this protocol, the receiver waits until a codeword of a level is decoded correctly before attempting to decode the next levels.

An example of the protocol is shown in Fig.~\ref{figp1:b}. In this example, codewords $A_1$, and $B_1$, are transmitted using the first and second levels of a two-level MLC, and codeword $A_1$ is decoded incorrectly. No attempt is made to decode $B_1$, (since its LLRs cannot be calculated without reliable knowledge of $A_1$), so a NACK is sent to the transmitter informing that $A_1$ failed. The transmitter responds by retransmitting $A_1$ on level 1, and transmitting a new level-2 codeword, $B_2$, on level 2. In this example the receiver successfully decodes $A_1$ after this transmission, so now it can attempt to decode $B_1$ (using the received samples from the previous transmission) and $B_2$ (using the current received samples). If $B_1$ fails, the transmitter is instructed to send $A_2$ and $B_1$. If $B_2$ fails, the transmitter sends $A_2$ and $B_2$, and if $B_1$ and $B_2$ succeed, $A_2$ and $B_3$ are transmitted (as shown in Fig.~\ref{figp1:b}).

The throughput of level-independent protocols is expected to be more than the level-dependent protocols, because in level-dependent protocols all levels with the same message words are retransmitted while in level-independent protocols, only erroneous upper-levels are retransmitted and lower-levels are used to transmit new codewords. 

For NC level-independent (NC-I) HARQ the LLRs are calculated according to (\ref{LLRMSD}), the same as for NC-D. For CC level-independent (CC-I) HARQ, the combining scheme is a little different than that of CC-D, because different codewords may be transmitted on each level during each transmission, so the LLRs are given by
\begin{equation}
\label{LLRMSDcc3}
\lambda_{n,i,l}= \ln \frac{ \sum_{x \in \mathcal{X}_{n}^0} p(y_{i,l} \mid x_{i,l}=x)} {\sum_{x \in \mathcal{X}_{n}^1} p(y_{i,l} \mid x_{i,l}=x) }+\lambda_{n,i,l-1}.
\end{equation}

\begin{figure}
\centering   
    \includegraphics[width=0.45\textwidth]{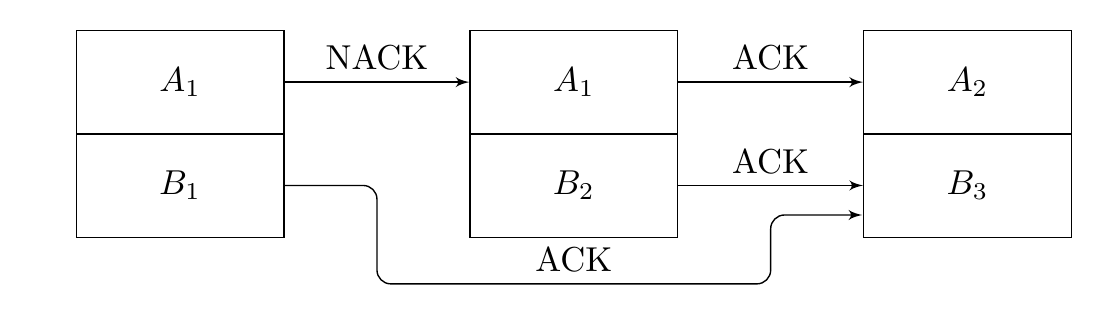}
\caption{Illustrative example of the use of level-independent HARQ.}
\label{figp1:b}
\end{figure}

For all protocols, the system uses AMC employing modulations and codes of different spectral efficiencies and rates for each SNR. The AMC is applied through the channel state information based mode-selection at the transmitter to enhance the throughput.

\section{Low Complexity LLR Estimation for MLC/MSD}
\label{LLRSimple}

Typically, at the decoder of a coded-modulation scheme, calculation of code bit LLRs using (\ref{LLRMSD}) is much more complex than actually decoding the binary FEC codes. Thus, reducing the LLR calculation complexity can substantially simplify the system. In this section, we simplify the LLR calculation for QAM and PAM constellations with SPM. The results of this section are used throughout the paper for both decoding and the code construction. For notational simplicity we drop the subscript $i$ and $l$ in the following, as we are only discussing the detection of a single symbol corresponding to one channel-use.

\subsection{LLR Calculation for Set-partitioned QAM}
\label{LLRSimpleQAM}

The LLR calculation for QAM with a two dimensional (2D) SPM needs a relationship between the LLR value and the real and imaginary parts of the received sample. This results in a function with two input variables and high arithmetic complexity. While this function can be divided into several regions and piecewise approximations with planes can be employed to simplify the LLR calculation, the number of regions grows fast for large size QAM constellations. As an alternative, a 2D SPM can be decomposed into two independent 1D SPMs. While decomposition of 2D to 1D mappings is well-known and straightforward for Gray mapping, in the following we proposed a new technique that is suitable for SPM.

We want to map $B$ bits, $[c_1c_2...c_B]$ onto a $2^B$-point square QAM constellation with SPM, such that bit $c_1$ has the lowest reliability, followed by $c_2$, and so on up to $c_B$ which is the most reliable. This can be achieved using the simple device shown in Fig.~\ref{fig:trans}, that involves a simple code and two PAM symbol mappers with natural mapping. By using this device at the transmitter, the LLR calculator at the receiver is greatly simplified. Using this device, we first precode the bits, giving $b=[b_1,b_2,...,b_B]$ where
\begin{equation}
\label{GAestimate}
b_k=\begin{cases} c_k\oplus c_{k+1} &  \forall \, \text{odd} \, k, \\ c_k &  \forall \, \text{even} \, k.\end{cases}
\end{equation} 
Then, we send the bits in odd-indexed positions to a  $2^{B/2}$-PAM symbol mapper for transmission over the I-channel, and the other bits to another $2^{B/2}$-PAM symbol mapper for the Q-channel. From Fig.~\ref{fig:trans}, $b_k^I=b_{2k-1}=c_{2k-1}\oplus c_{2k}$ and $b_k^Q=b_{2k}=c_{2k}$. 

\begin{figure}[h!]
\center
    \includegraphics[width=0.35\textwidth]{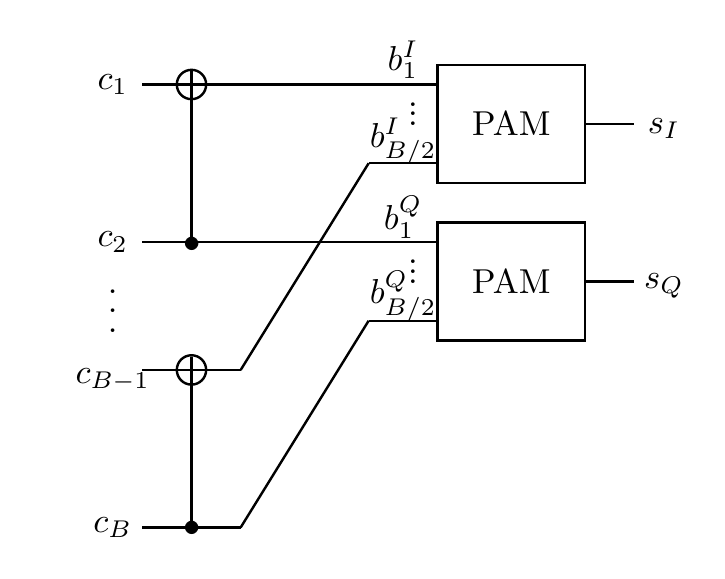}
\caption{A transform to map natural numbers to 16-QAM with 2D SPM using two independent 4-PAMs with SPM. \hfill \vspace{0.3cm}}
\label{fig:trans}
\end{figure}

The PAM symbol mappers are identical and employ natural mapping between the input bits $\textbf{d}=[d_{B/2}...d_2d_1]$ (where $d_k=b_k^I$ or $b_k^Q$ for the I-channel and Q-channel mappers, respectively), and the points 
\begin{equation}
\label{PAMLLRn}
\text{SM}_{\text{PAM}}[\textbf{d}]= 2d-(2^{B/2}-1),
\end{equation} 
where $d=d_{B/2}2^{B/2-1}+...+d_22^1+d_12^0$ is the integer representation of $\textbf{d}$. Equivalently, we can write
\begin{equation}
\label{PAMLLRn1}
\text{SM}_{\text{PAM}}[\textbf{d}]= \sum_{k=1}^{B/2} (2{d}_k-1)2^{k-1}.
\end{equation} 

Note that with this mapping, the least significant bit, $d_1$, is also the least reliable, and the most significant bit, $d_{B/2}$, is the most reliable. The transmitted QAM symbol is then
\begin{equation}
\label{QAMLLRn1}
\begin{split}
& \text{SM}_{\text{QAM}}[\textbf{c}]=\text{SM}_{\text{PAM}}[\textbf{b}^I]+j\text{SM}_{\text{PAM}}[\textbf{b}^Q] \\ & =\sum_{k=1}^{B/2}(2b_k^I-1)2^{k-1}+j\sum_{k=1}^{B/2}(2b_k^Q-1)2^{k-1} \\&= \sum_{k=1}^{B/2}(2[c_{2k-1}\oplus c_{2k}]-1)2^{k-1}+ j \sum_{k=1}^{B/2}(2c_k-1)2^{k-1},
\end{split}
\end{equation} 
where $j=\sqrt{-1}.$

\begin{theorem}
\label{Theorem11}
The resulting constellation in (\ref{QAMLLRn1}) uses SPM.
\end{theorem}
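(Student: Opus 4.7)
I will verify the defining SPM property directly from~(\ref{QAMLLRn1}): for every $m\in\{0,1,\ldots,B\}$ and every fixing of $c_1,\ldots,c_m$, the resulting subset of constellation points has squared minimum distance $d_m^2=2^{m+2}$, so each additional constraint doubles $d_m^2$, which is the Ungerboeck set-partitioning criterion.

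First I would establish a one-line PAM lemma. With the natural mapping of~(\ref{PAMLLRn1}), fixing the $s$ least-significant bits $d_1,\ldots,d_s$ leaves PAM symbols forming an arithmetic progression of spacing $2^{s+1}$, so the constrained PAM has squared minimum distance $2^{2s+2}$. This is immediate: the integer index decomposes as $d=\text{const}+2^s(d_{s+1}+2d_{s+2}+\cdots)$, and~(\ref{PAMLLRn}) scales by $2$.

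Next comes the even case $m=2s$. Inverting the precoding~(\ref{GAestimate}), fixing $c_1,\ldots,c_{2s}$ is equivalent to fixing $b_1^I,\ldots,b_s^I$ and $b_1^Q,\ldots,b_s^Q$, i.e., the $s$ LSBs on each PAM. The I- and Q-branches of~(\ref{QAMLLRn1}) are then independent, so the QAM squared minimum distance equals $\min(d_I^2,d_Q^2)=2^{2s+2}=2^{m+2}$ by the lemma.

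The odd case $m=2s-1$ is the main obstacle, because $c_{2s-1}$ only fixes the XOR $b_s^I\oplus b_s^Q$ and thus couples the two branches. I pick two distinct subset points and split on whether their values of $b_s^Q$ agree. When they agree, both $\Delta I$ and $\Delta Q$ depend only on PAM bits of index $>s$ and are therefore multiples of $2^{s+1}$, giving $\Delta I^2+\Delta Q^2\ge 2^{2s+2}$. When they disagree, the constraint forces $b_s^I$ to flip as well, so both $\Delta I$ and $\Delta Q$ become odd multiples of $2^s$ (with signs aligned or opposed depending on whether $c_{2s-1}=0$ or $1$); each is nonzero with $|\Delta I|,|\Delta Q|\ge 2^s$, yielding $\Delta I^2+\Delta Q^2\ge 2\cdot 2^{2s}=2^{2s+1}=2^{m+2}$. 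Combining both subcases gives $d_{2s-1}^2=2^{m+2}$, which together with the even case establishes the doubling property and hence SPM.
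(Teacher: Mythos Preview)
Your argument is correct and takes a genuinely different route from the paper's. The paper proceeds by induction on $B$: it rewrites (\ref{QAMLLRn1}) as $\text{SM}_{2^B\text{-QAM}}[\mathbf{c}]=\text{SM}_{2^{B-2}\text{-QAM}}[\mathbf{c}_{1:B-2}]+2^{B/2-1}\,\text{SM}_{4\text{-QAM}}[\mathbf{c}_{B-1:B}]$, notes that appending the $4$-QAM offset to a set-partitioned $2^{B-2}$-QAM leaves every existing subset on the same lattice (so the first $B-2$ partition levels are inherited), and then checks that the two new top levels have minimum distances $2^{B/2}$ and $\sqrt{2}\cdot 2^{B/2}$; the base case $B=2$ is by inspection. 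You instead compute $d_m^2$ directly for every $m$, splitting on the parity of $m$ and, in the odd case, on whether $b_s^Q$ coincides at the two points. The paper's recursive decomposition emphasizes the self-similar structure of the mapping and mirrors the classical Ungerboeck description, but its ``same lattice'' step is somewhat informal. Your approach is more self-contained, delivers the explicit sequence $d_m^2=2^{m+2}$ in one pass, and isolates precisely where the XOR precoding does its work (the odd-$m$ coupling that forces both PAM coordinates to move by an odd multiple of $2^s$). To make it airtight you should also note that the lower bounds are attained---e.g., flip only $b_{s+1}^I$ for $m=2s$, and flip only $(b_s^I,b_s^Q)$ for $m=2s-1$---so that equality $d_m^2=2^{m+2}$ (and hence exact doubling) holds.
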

\begin{proof}
 Observe that
\begin{equation}
\label{PAMLLRn1}
\begin{split}
& \text{SM}_{2^B-\text{QAM}}[\textbf{c}] = \\
& \sum_{k=1}^{B/2-1}(2[c_{2k-1}\oplus c_{2k}]-1)2^{k-1}+j\sum_{k=1}^{B/2-1}(2c_k-1)2^{k-1} \\
& +(2[c_{B-1}\oplus c_{B}]-1)2^{B/2-1}+j(2c_B-1)2^{B/2-1} \\  
&=\text{SM}_{2^{B-2}-\text{QAM}}[\textbf{c}_{1:B-2}]+\text{SM}_{4-\text{QAM}}[\textbf{c}_{B-1:B}]2^{B/2-1},
\end{split}
\end{equation} 
so that the $2^B$-QAM constellation can be constructed by enlarging a $2^{B-2}$-QAM constellation by a factor of 4. Under the assumption that the $2^{B-2}$-QAM constellation has SPM, then the enlarged constellation does as well. Each point $x=\text{SM}_{2^{B-2}-QAM}[\textbf{c}_{1:B-2}]$ in each subset of the smaller constellation is replaced by four points in the larger constellation, $x+(-1-j)2^{B/2-1}$, $x+(1+j)2^{B/2-1}$, $x+(1-j)2^{B/2-1}$, $x+(-1+j)2^{B/2-1}$, with the same bit labellings as $x$ (i.e., $\textbf{c}_{1:B-2}$), but with 00, 01, 10 or 11 appended, respectively. Since all  the points in each set-partitioned subset of the smaller constellation fall on some regular lattice, and the new points in the enlarged subset fall on the same lattice, the minimum distance between points in the subset remains the same. By adding two more layers of the set-partitioning, creating subsets with minimum distance of $2^{B/2}$ and $\sqrt{2}\times2^{B/2}$, the resulting $2^B$-QAM constellation has SPM, provided the $2^{B-2}$-QAM constellation has SPM. But since by inspection, the 4-QAM constellation given by (\ref{QAMLLRn1}) with $B=2$ has SPM, it follows by induction that the $2^B$-QAM constellation has SPM for all even values of $B$.
\end{proof}

Note that the constellation with SPM in (\ref{QAMLLRn1}) is regular since all subsets at a specific level $n$ have the same average Euclidean distance spectrum and consecutively the same capacity \cite{Huber1994}. Therefore, one binary code can be constructed for all subsets within each level.

At the receiver, the LLRs of the code bits, $\textbf{c}$, are readily computed from the received sample. The real and imaginary parts of each received sample are sent to two different $2^{B/2}$-PAM LLR calculators, which can use (\ref{LLRMSD}) to calculate the LLRs corresponding to $\textbf{b}^I$ and $\textbf{b}^Q$, respectively. The LLRs of the first PAM bit-channels are calculated, giving $\lambda_1^{I}$ and $\lambda_1^{Q}$, which are combined to give the LLR of $c_1$,
\begin{equation}
\label{PAMN}
\lambda_1^{c}=\lambda_1^I  \boxplus \lambda_1^Q,
\end{equation}
where $\boxplus$ is the boxplus operator defined as $\lambda_1\boxplus\lambda_2= 2\,\text{tanh}^{-1}\big(\text{tanh}(\frac{\lambda_1} {2})\text{tanh}(\frac{\lambda_2} {2})\big)$. Once the upper code (of which $c_1$ is a code-bit) has been decoded, and an estimate $\hat{c}_1$ of $c_1$ based on the code has been generated, the LLR of $c_2$ can be calculated as 
\begin{equation}
\label{PAMN}
\lambda_2^{c}=(1-2\hat{c}_1)\lambda_1^I+\lambda_1^Q,
\end{equation}
and the second code can be decoded, giving $\hat{c}_2$. Armed with $\hat{c}_1$ and $\hat{c}_2$, the LLR calculators can detect the second PAM bit-channels, giving $\lambda_2^I$ and $\lambda_2^Q$. These in turn are used to calculate 
\begin{equation}
\label{PAMLLRn1}
\begin{split}
& \lambda_3^{c}=\lambda_2^I  \boxplus \lambda_2^Q, \\
&\lambda_4^{c}=(1-2\hat{c}_3)\lambda_2^I+\lambda_2^Q.
\end{split}
\end{equation} 
This process is repeated until all levels have been decoded.

\subsection{LLR Calculation for Set-partitioned PAM}
\label{LLRSimplePAM}

In last section, we showed how to separate a QAM constellation with 2D SPM into two independent PAM constellations, which by itself significantly reduces the complexity of LLR calculations, but the complexity is still needlessly high. Although the LLR calculation for BPSK is easy, computed as $\lambda=-\frac{4y}{N_0}$ for the AWGN channel, for PAM constellations direct application of (\ref{LLRMSD}) is of high complexity even for moderately sized constellations. In this section, we try to simplify the LLR estimation of PAM constellations.

The LLR calculation for a MSD based on the Jacobi theta functions is proposed in \cite{Fredj2016}, in which knowledge of the values of Jacobi theta functions is required for the LLR calculation. However, saving and referring to values of these functions needs large memory capacity. Instead, the LLR of a MSD can be approximated by the dominant term since $\ln(\underset{t}{\sum}\text{e}^{-|h_t|})\approx -\underset{t}{\min}(|h_t|)$. Therefore, the LLR of a MSD in (\ref{LLRMSD}) can be approximated by
\begin{equation}
\label{maxlog}
\lambda_{n}\approx \frac{1}{N_0}[-\min_{x \in \mathcal{X}_{n}^0}|y-x|^2+\min_{x \in \mathcal{X}_{n}^1}|y-x|^2].
\end{equation}
This approximation, known as the max-log approximation (MLA) has been used for LLR estimation of Gray-mapped QAM \cite{Pyndiah1995}.  In \cite{Gul2011}, three methods, namely a MLA in (\ref{maxlog}), a log-separation algorithm (LSA) and a mixed algorithm, are proposed to reduce the complexity of the LLR calculation of a MSD for phase shift keying (PSK) and amplitude-phase-shift keying (APSK) constellations. In \cite{Gul2011}, it is shown that LSA is slightly better for low SNRs and for moderate-to-high SNRs the MLA is essentially the same as the exact calculation using (\ref{LLRMSD}). Further analysis of the MLA can result in substantial simplification of the LLR calculation. Here, by analyzing the MLA and employing the piecewise linear approximation of the LLRs, we reduce the LLR calculation complexity for PAM constellation with SPM.

\begin{proposition}
\label{Theorem1}
Let $\mathcal{X}_{n}$ be a $M_n=2^{B-n+1}$-point subset of a $2^B$-PAM constellation given that the $n-1$ upper-level (least significant) bits are known, and let $x_d$ be the $(d+1)^{th}$ largest element of $\mathcal{X}_{n}$. For $d \in {0,1,...,M_n-2}$ and $\mathbb{L}$ a large number, let $\Omega_d=\{y\in \mathbb{R} \enskip|-\mathbb{L}\delta_d+ x_d<y \leq x_{d+1}+\mathbb{L}\delta_{d-M_n+2}\}$, be the interval $(x_d,x_{d+1}]$ except the first interval extends down to $-\infty$ and the last up to $+\infty$. Then, for $y \in \Omega_d$
\begin{equation}
\label{maxlogPAM}
\lambda_{n}(y)\approx \frac{1}{N_0}(-1)^d(x_{d+1}-x_{d})(x_{d+1}+x_d-2y).
\end{equation}
\end{proposition}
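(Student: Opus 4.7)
The plan is to start from the max-log approximation in (\ref{maxlog}), which the proposition is really a corollary of, and simply chase the geometry of the set-partitioned PAM. Because the natural PAM mapping in (\ref{PAMLLRn}) makes the $n$th bit correspond to a factor-of-2 dilation and shift of the lower levels, conditioning on the $n-1$ upper-level bits leaves a regular sub-constellation $\mathcal{X}_n=\{x_0<x_1<\cdots<x_{M_n-1}\}$ in which membership in $\mathcal{X}_n^0$ vs.\ $\mathcal{X}_n^1$ alternates: consecutive points $x_d,x_{d+1}$ always lie in opposite subsets. First I would record this alternation as a short lemma/observation, since all of the subsequent bookkeeping (and in particular the $(-1)^d$ sign) hinges on it.

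Next, for $y\in\Omega_d$ (the interval $(x_d,x_{d+1}]$, extended to $-\infty$ at $d=0$ and to $+\infty$ at $d=M_n-2$), I would argue that the two minima in (\ref{maxlog}) are attained at $x_d$ and $x_{d+1}$. Concretely, the nearest constellation point to $y$ in all of $\mathcal{X}_n$ is one of these two endpoints by ordering; by the alternation observation, one of $\{x_d,x_{d+1}\}$ is in $\mathcal{X}_n^0$ and the other in $\mathcal{X}_n^1$, so each subset's nearest-to-$y$ element is an endpoint of $\Omega_d$. All other elements of $\mathcal{X}_n^0$ and $\mathcal{X}_n^1$ are strictly farther from $y$ (this is where the extension to $\pm\infty$ via the $\mathbb{L}\delta$ terms matters for $d=0$ and $d=M_n-2$, since there one needs to check that even an arbitrarily far $y$ has its nearest $\mathcal{X}_n^0$ and $\mathcal{X}_n^1$ points equal to $x_d$ and $x_{d+1}$).

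With the two minima identified, the proof reduces to algebra. Substituting into (\ref{maxlog}) gives
\begin{equation*}
\lambda_n(y)\approx\frac{1}{N_0}\bigl[-(y-x_{a})^2+(y-x_{b})^2\bigr],
\end{equation*}
where $(a,b)=(d,d+1)$ when $x_d\in\mathcal{X}_n^0$ and $(a,b)=(d+1,d)$ otherwise. Using the identity
\begin{equation*}
(y-x_{d+1})^2-(y-x_{d})^2=(x_{d+1}-x_d)(x_{d+1}+x_d-2y),
\end{equation*}
both cases collapse to $\tfrac{1}{N_0}(\pm 1)(x_{d+1}-x_d)(x_{d+1}+x_d-2y)$. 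Finally, I would tie the sign to the parity of $d$ by noting that SPM/natural mapping puts (without loss of generality) $x_0\in\mathcal{X}_n^0$, so $x_d\in\mathcal{X}_n^0$ iff $d$ is even, giving the $(-1)^d$ factor in (\ref{maxlogPAM}).

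The only genuinely delicate step, and the one I would be most careful about, is the alternation claim, since it silently uses both the SPM structure and the natural-mapping convention established in Section~\ref{LLRSimpleQAM}; I would justify it by induction on $n$, using the same enlargement-by-2 argument as in the proof of Theorem~\ref{Theorem11}, where splitting on the $n$th bit is precisely the splitting into two cosets of step $2\Delta$ inside a PAM of step $\Delta$. Everything else is elementary expansion of squared distances.
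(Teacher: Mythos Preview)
Your proposal is correct and follows essentially the same route as the paper: establish that $\mathcal{X}_n$ is a scaled/shifted natural-mapped PAM so that the bit alternates across consecutive points, identify the two nearest-neighbour minimisers of (\ref{maxlog}) on $\Omega_d$ as $x_d$ and $x_{d+1}$, and then expand the difference of squares to obtain (\ref{maxlogPAM}) with the $(-1)^d$ sign coming from parity. The only minor difference is that the paper dispatches the alternation claim in one line by noting that $\mathcal{X}_n$ is literally an $M_n$-PAM with natural mapping (scaled by $2^{n-1}$ and shifted), so the least significant remaining bit flips with every consecutive point and in particular $x_0\in\mathcal{X}_n^0$ always; your proposed induction on $n$ is more elaborate than necessary, and the ``without loss of generality'' is not needed since the mapping fixes the parity convention.
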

\begin{proof}
Since $\mathcal{X}_{n}$ is just a $M_n$-PAM constellation with natural mapping that has been scaled by $2^{n-1}$ and shifted by an amount that depends on the known upper level bits, the label of the least significant bit changes with each consecutive point. Therefore, for $y \in \Omega_d$ the pair $(\min_{x \in \mathcal{X}_{n}^0}|y-x|^2,\min_{x \in \mathcal{X}_{n}^1}|y-x|^2)$ is either $(x_d,x_{d+1})$ or $(x_{d+1},x_{d})$ with the first applying when $d$ is even and the second when $d$ is odd. Thus (\ref{maxlog}) can be written as $\frac{1}{N_0}(-1)^d[-|y-x_d|^2+|y-x_{d+1}|^2]$, which reduces to (\ref{maxlogPAM}). 
\end{proof}

\begin{figure*}
\centering  
\subfigure[]{\label{fig:e}\includegraphics[width=54mm]{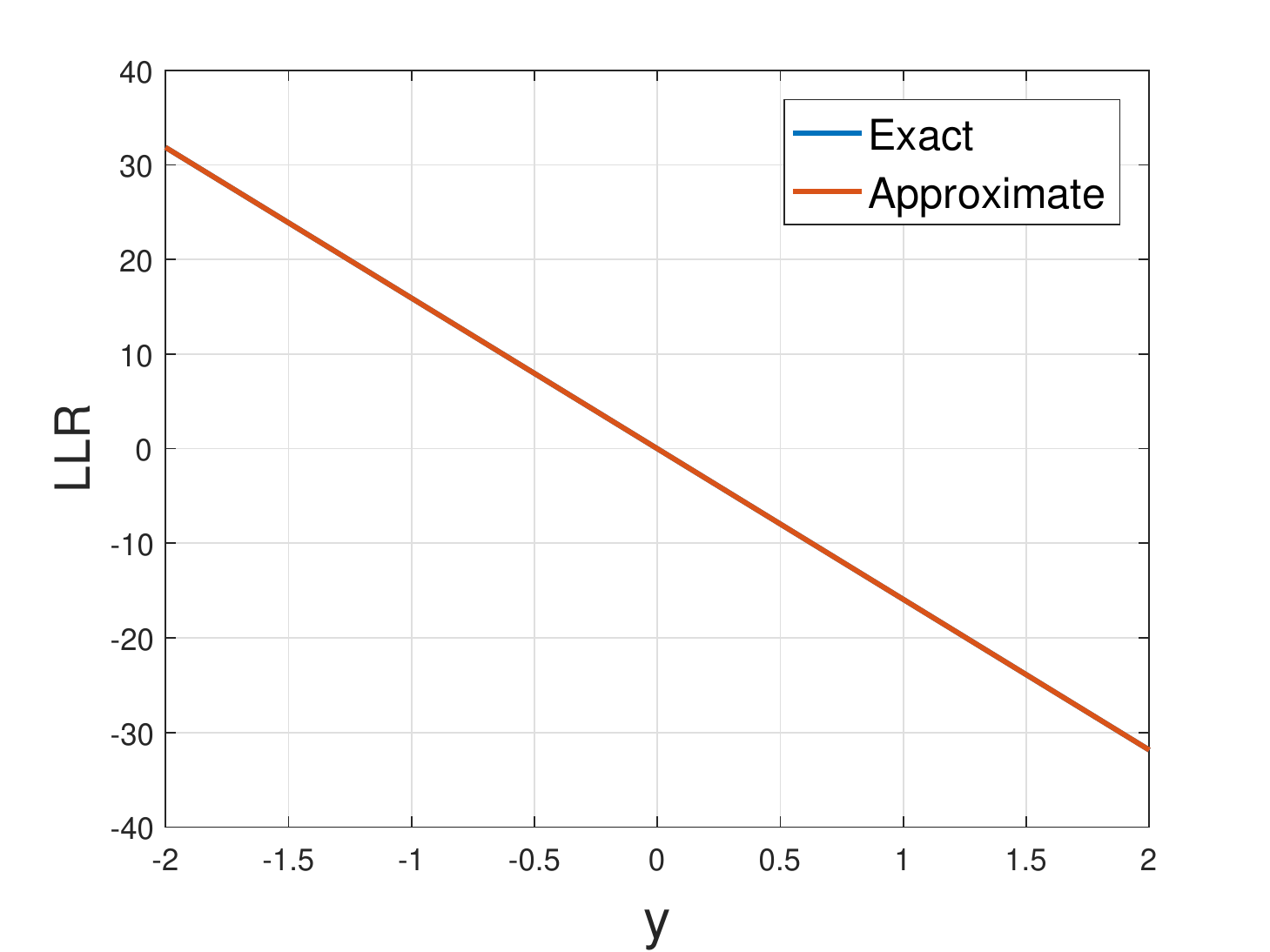}}
\subfigure[]{\label{fig:a}\includegraphics[width=54mm]{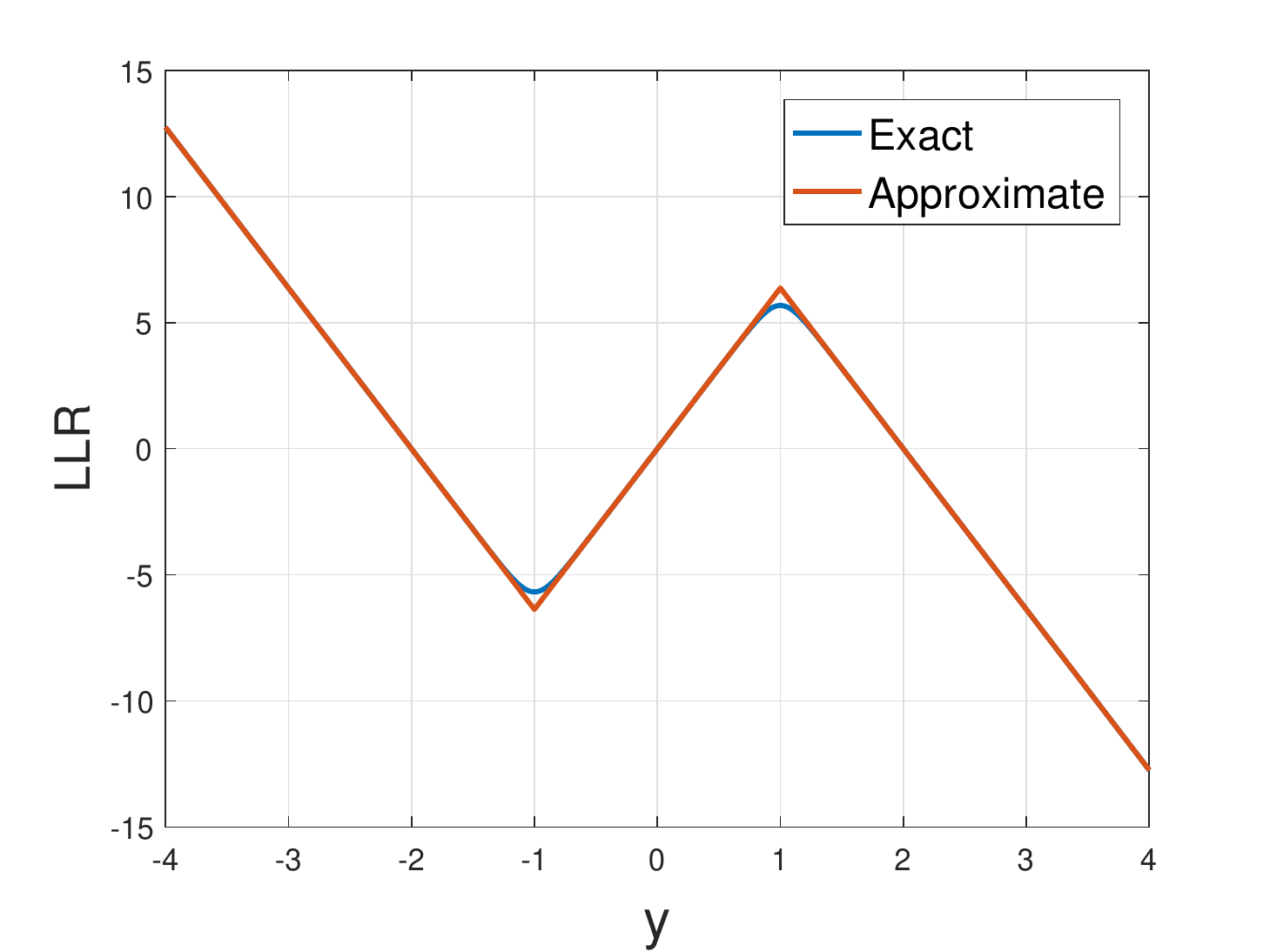}}
\subfigure[]{\label{fig:b}\includegraphics[width=54mm]{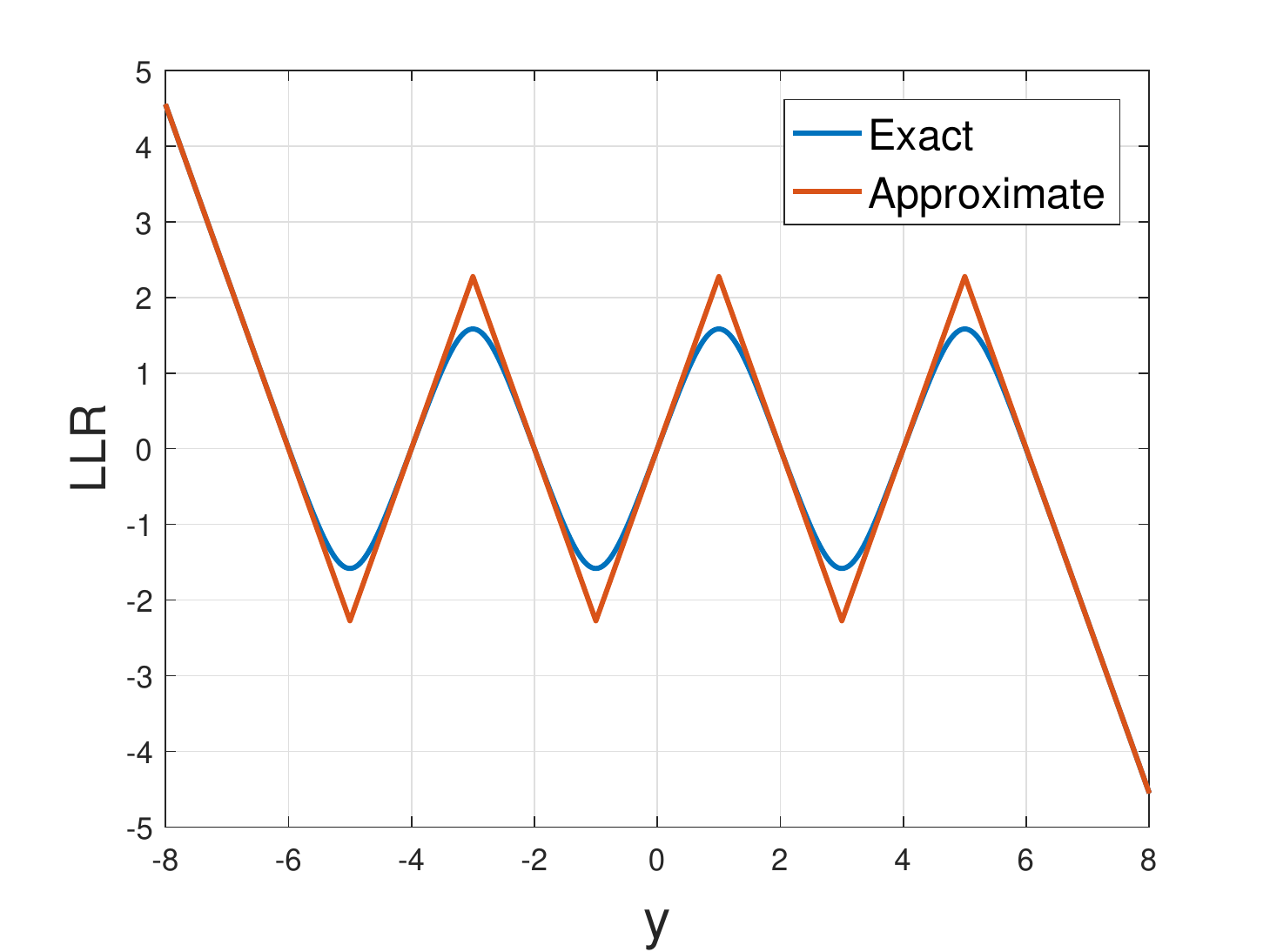}}
\caption{Comparison of the exact and the approximate estimation of LLRs as a function of $y$ for a) a BPSK, b) the first level of 4-PAM, and c) the first level of 8-PAM, respectively.}
\label{fig:PAMLLR}
\end{figure*}

The LLR as a function of the received symbol $y$, approximated by (\ref{maxlogPAM}) is compared with the exact LLR computed by (\ref{LLRMSD}) as shown in Fig.~\ref{fig:PAMLLR}, for the first level of 4-PAM, 8-PAM and BPSK with $\gamma=6$ dB. This approximation is quite accurate at SNRs of interest but require much less effort to evaluate than the exact expression.   Even thought this approximation is less accurate at low SNRs, we were unable to find a case where the use of the simplified LLRs had any discernible impact on the
throughput of the codes designed in this paper.

The proposed LLR approximation for QAM with 2D SPM consists of the max-log-based LLR approximation for two PAMs with SPM and a linear device. This can be computed using $9B-8$ real additions, $5B-6$ real multiplications, $B$ real divisions, $2B$ exponential operations, and $B$ logarithm operations. In contrast, if we estimate the LLR directly using (\ref{LLRMSD}), we need $8\times2^B-7-2B$ real additions, $6\times2^B-7$ real multiplications, $B-1$ real divisions, $2\times2^B-4$ exponential operations, and $B-1$ logarithm operations. Note that the complexity of the simplified LLR approximation grows with $B$ while the exact estimation grows with $2^B$. The proposed LLR simplification method is less complex when $B\geq2$.

\section{Throughput as a Design Metric}
\label{sec:Throughput}

Typically, for the AWGN channel, polar codes are designed to minimize the FER for a given code rate, $R$, at a given SNR. That is, the $K$ elements of the information-set are chosen in an attempt to provide as low a FER as possible. Alternatively, for a given SNR and a target FER, one can choose the information-set to be as large as possible (thereby maximizing the code rate), while ensuring the target FER is not exceeded. However, for systems employing HARQ, neither the FER nor the code rate is of primary importance. For ARQ systems, messages are transmitted indefinitely until they are correctly received. As such, the throughput, which is the rate that message bits are correctly received (in message bits per channel use), is a more relevant metric. By using this new criterion, codes that are quite different from  those that maximize the code rate or minimize the FER can be designed.

For the NC-D protocol, where a single code is used across all binary channels of the constellation and received samples from previous transmission are not exploited, the throughput is given by 
\begin{equation}
\label{throughput}
\eta_{\text{NC-D}}= \frac{K}{NB} (1-P_{K}),
\end{equation}
where $P_K$ is the FER when the message word length is $K$. The optimization problem, for a given codeword length, $N_{\text{tot}}=NB$, is to find $K$, and the associated information-set of the polar code, that maximize $\eta_{\text{NC-D}}$. When the NC-I protocol is used, the throughput, since the levels are independent, is 
\begin{equation}
\label{throughputID}
\eta_{\text{NC-I}}= \sum_{n=1}^{B} \frac{K_n}{N} (1-P_{n}^{K_n}),
\end{equation}
where $K_n$ is the length of the information-set of the codeword transmitted on the $n^{th}$ binary channel of the constellation (with $K=\sum_n K_n$ still being the total number of transmitted message bits.) and $P_n^{K_n}$ is the corresponding FER of that code.

When Chase-combining is used, the FER decreases depending on the number of retransmissions, so the throughput of the CC-D protocol is

\begin{equation}
\label{throughputcc}
  \eta_{\text{CC-D}}=  \frac{K}{NB \displaystyle \sum_{l=1}^{L} \prod_{l^{\prime}=1}^{l-1}P_{l^{\prime}}^{K}}(1-P_L^K),
\end{equation}
where $P_l^K$ is the FER of the $l^{th}$ transmission, given that the previous $l-1$ transmissions of the codeword failed. For CC-I, the throughput is 

\begin{equation}
\label{throughputccI}
  \eta_{\text{CC-I}}=  \sum_{n=1}^{B} \frac{K_n}{\displaystyle N \sum_{l=1}^{L} \prod_{l^{\prime}=1}^{l-1}P_{n,l^{\prime}}^{K_n}}(1-P_{n,L}^{K_n}).\;
\end{equation}

\section{Polar Code Design Methods for BPSK}
\label{polarcodedesign}

In this section, we start by explaining the simulation-based design method for SCD. Then, we describe the design method for NC and CC HARQ using GA which can design the code with low complexity.

\subsection{Simulation-based Code Design for SCD}
\label{polarcodedesignSCsim}

To design polar codes, the positions of the information bits must be determined. Determining the information-set by using Monte Carlo simulation, proposed by Arikan in \cite{Arikan2009}, is one of the methods of polar code design which benefits from high flexibility for adapting to a variety of practical channels.  In the simulation-based design method, as described in  \cite{Balogun2016}, the transmission of a large number of message words is simulated and SCD  decodes bits subsequently from the first to the last.  Then, the number of the first error events\footnote[2]{For each codeword, the first error event defined as the first erroneous output bit. This error does not include the propagated error and just represents the error of the bit-channel.} for each bit-channel is measured. The number of transmitted codewords for achieving the sufficient statistic can be decreased if, after recording each first error event, the corresponding bit is corrected to prevent propagating that error and the next bit-channels are examined subsequently. When the polar code is designed for a predetermined rate $R$ at a specific SNR, the best information-set is chosen to minimize the FER by finding the $K$ message bit positions with the lowest number of the first error events.

By recording the position of each first error event for each simulated codeword, it is easy to evaluate the FER for any information-set. Any simulated codeword with at least one first error event in positions specified by that information-set would also have been decoded incorrectly by a real decoder for the polar code defined by that information-set. Thus, the FER for a given information-set can be approximated  by dividing the number of incorrectly decoded codewords by the number of simulated codewords.

More formally, suppose we simulate $N_{\text{SIM}}$ codewords of length $N$ at a given SNR. Let $\epsilon_{m,\kappa}=1$ if a first error event occurred in the $\kappa^{th}$ bit-channel during the $m^{th}$ simulated codeword transmission, and  $\epsilon_{m,\kappa}=0$ otherwise\footnote[3]{Let $Z_\kappa= \Sigma_{m=1}^{N_{\text{SIM}}} \epsilon_{m,\kappa}$ be the total number of first error events in the $\kappa^{th}$ bit-channel. The information-set, $\mathcal{A}$, of the minimum-FER code of rate $K/N$ contains the values of $\kappa$ with the $K$ smallest values of $Z_\kappa$.}. For a given hypothetical information-set, $\mathcal{A}$, the $m^{th}$ simulated codeword would have been decoded incorrectly if $\Sigma_{\kappa \in \mathcal{A}} \, \epsilon_{m,\kappa}>0$. Let $\xi_m=1$ if $\Sigma_{\kappa \in \mathcal{A}} \, \epsilon_{m,\kappa}>0$. Out of the $N_{\text{SIM}}$ simulated codewords, the number of codewords that would have been incorrectly decoded is $\Sigma_{m=1}^{N_{\text{SIM}}} \xi_m$.

Using this method, it is straightforward to design polar codes to maximize the throughput. Once the simulation of a sufficiently large number of codewords has completed (typically $N_{\text{SIM}}=10000$ codewords is sufficient for low-to-moderate SNRs) at the desired SNR and the position of the first error events has been recorded ($\epsilon_{m,\kappa}$), the information-set of the minimum FER polar codes for every code rate from $\frac{1}{N}$ to $1$ is determined (i.e., $\forall K \in \{1,...,N\} $) and the corresponding FER is approximated. Then, the code rate that maximizes the throughput, (\ref{throughput}), is determined, and the associated information-set is used to define the optimal polar code at that SNR.

\subsection{GA-based Code Design for SCD}
\label{polarcodedesignSCGA}

As shown in \cite{Trifonov2012}, polar codes can be designed with low complexity using the GA. In GA, the LLR distributions for each node in polar graph can be approximated with a Gaussian distribution, in which the variance of the distribution $\sigma^2$ is two times the average LLR of a specific bit-channel. Therefore, to implement the GA-based design for polar codes we only need to update the average LLR through the polar graph. The updating average LLR rule for upper bit-channels is $\phi^{-1} (1-[1-\phi(\bar{\lambda_1})][1-\phi(\bar{\lambda_2})])$ and for lower bit-channels is $\bar{\lambda_1}+\bar{\lambda_2}$ where $\bar{\lambda}=E[\lambda_1]$ \cite{Vangala2015}. From \cite{Chung2001}, $\phi(h)$ is approximated as 
\begin{equation}
\label{GAestimate}
\phi(h)\triangleq\begin{cases} 1 & h=0, \\ \text{exp}(-0.4527h^{0.86}+0.0218) &  0<h \leq 10, \\ \sqrt{\frac{\pi}{2}}(1-\frac{10}{7h})\text{exp}(\frac{-h}{4}) & 10<h.\end{cases}
\end{equation}

Here, we design the throughput-optimal polar codes using the GA and compare it with the simulation-based approach. For designing the code we call Function~\ref{Algorithm21}(GA-BER($\gamma$,$N$),$N$) in Appendix~\ref{sec:appendix}. Function~\ref{Algorithm21} approximates the FER of the code based on a well-known bound given as \cite{Seidl2013-1}
\begin{equation}
\label{FERbound0}
P_{K}=1-\displaystyle  \prod_{i=1}^{K}(1-v_i),
\end{equation}
where $v_i$ is the bit error rate (BER) of the $i^{th}$ bit-channel given previous bit-channels are frozen. These BERs are estimated using Function GA-BER in Appendix~\ref{sec:appendix}. In the next step, the corresponding throughput for each message-length is computed and the code rate with the highest throughput is chosen. From (\ref{throughput}) and (\ref{FERbound0}), the throughput can be written as 
\begin{equation}
\label{FERbound1}
\eta_{\text{NC},K}=\frac{K}{NB}\displaystyle  \prod_{i=1}^{K}(1-v_i).
\end{equation}

\begin{proposition}
\label{Lemma1}
The FER estimation in $P_{K}$ is monotonically increasing with respect to $K$.
\end{proposition}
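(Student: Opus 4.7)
The plan is to show monotonicity by directly computing the forward difference $P_{K+1}-P_K$ and observing that it is a product of quantities known to be nonnegative.

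First I would write $P_{K+1}$ in terms of $P_K$ by factoring out the last term of the product:
\begin{equation*}
P_{K+1}=1-\prod_{i=1}^{K+1}(1-v_i)=1-(1-v_{K+1})\prod_{i=1}^{K}(1-v_i).
\end{equation*}
Subtracting $P_K=1-\prod_{i=1}^{K}(1-v_i)$ then gives, after a one-line cancellation,
\begin{equation*}
P_{K+1}-P_K=v_{K+1}\prod_{i=1}^{K}(1-v_i).
\end{equation*}

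Next I would justify that the right-hand side is nonnegative. Each $v_i$ is a bit-channel BER, so $v_i\in[0,1/2]$ (certainly $v_i\in[0,1]$ suffices), which immediately yields $v_{K+1}\ge 0$ and $(1-v_i)\ge 0$ for every $i$. Hence $P_{K+1}\ge P_K$, and induction (or simply taking $K$ arbitrary) gives the claim for every $K\ge 1$. The base case $K=1$ reduces to $P_1=v_1\ge 0=P_0$, which is immediate.

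There is no real obstacle here; the statement is essentially an algebraic identity about products of numbers in $[0,1]$. The only thing worth being slightly careful about is the admissible range of $v_i$ used in the bound (\ref{FERbound0}): as long as the GA-based BER estimates returned by Function~\ref{Algorithm21} lie in $[0,1]$, the argument goes through verbatim, and the inequality is strict exactly when $v_{K+1}>0$ and all previously included bit-channels have $v_i<1$, which is the generic case of interest.
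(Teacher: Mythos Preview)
Your argument is correct and essentially identical to the paper's: both compute the forward difference (the paper writes $P_K=P_{K-1}+v_K\prod_{i=1}^{K-1}(1-v_i)$, you write $P_{K+1}-P_K=v_{K+1}\prod_{i=1}^{K}(1-v_i)$) and conclude monotonicity from $v_i\in(0,1)$. The only cosmetic difference is that the paper asserts strict inequality via $0<v_i<1$, whereas you more carefully allow $v_i\in[0,1]$ and note when strictness holds.
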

\begin{proof}
The FER estimation in (\ref{FERbound0}) can be written in a recursive form as
\begin{equation}
\label{FERsplit}
\begin{split}
P_{K}= 1- (1-v_K)\prod_{i=1}^{K-1}(1-v_i) = P_{K-1}+v_K \prod_{i=1}^{K-1}(1-v_i).
\end{split}
\end{equation}
Thus, (\ref{FERsplit}) monotonically increases with $K$ since $v_K \prod_{i=1}^{K-1}(1-v_i)>0$ due to $0<v_i<1$.
\end{proof}

\begin{proposition}
\label{Theorem55}
The throughput in (\ref{throughput}) is a unimodal function of the code rate. 
\end{proposition}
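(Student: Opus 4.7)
The plan is to establish unimodality by controlling the sign of the discrete derivative. Since the code rate $R=K/(NB)$ is a strictly monotone function of the message length $K$, unimodality in $R$ is equivalent to unimodality in $K$, so the goal reduces to showing that the sequence $\{\eta_{\text{NC},K}\}_{K=1}^{N}$ given by (\ref{FERbound1}) is unimodal in $K$. From that expression, the consecutive ratio telescopes cleanly:
\begin{equation*}
\frac{\eta_{\text{NC},K+1}}{\eta_{\text{NC},K}}=\frac{K+1}{K}(1-v_{K+1})=\left(1+\frac{1}{K}\right)(1-v_{K+1}),
\end{equation*}
so unimodality follows once it is shown that this ratio crosses the threshold $1$ at most once as $K$ grows, transitioning from values $\geq 1$ (throughput non-decreasing) to values $<1$ (throughput strictly decreasing).

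A sufficient condition is that the ratio itself be monotonically non-increasing in $K$. The first factor $1+1/K$ is strictly positive and strictly decreasing in $K$, so the remaining step is to show that $1-v_{K+1}$ is positive and non-increasing in $K$. Positivity is immediate since $v_i\in(0,1)$, and non-increasingness follows from the information-set construction used in Function~\ref{Algorithm21}: for each $K$ the set is taken as the $K$ bit-channels with the smallest BERs, so after indexing by the order of selection one has $v_1\leq v_2\leq\cdots\leq v_N$. The product of two positive non-increasing factors is itself non-increasing, which gives the required monotonicity of the ratio.

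To conclude, let $K^\star$ denote the largest index for which $\eta_{\text{NC},K+1}/\eta_{\text{NC},K}\geq 1$. For $K\leq K^\star$ the ratio stays $\geq 1$, so the throughput is non-decreasing up to index $K^\star+1$; for $K>K^\star$ the monotonicity established above prevents the ratio from ever returning to $\geq 1$, so the throughput is strictly decreasing. This is precisely the definition of a unimodal sequence, and by the reparametrization $R=K/(NB)$, the throughput is unimodal as a function of the code rate.

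The main subtlety is justifying the ordering $v_1\leq v_2\leq\cdots\leq v_N$: it is a consequence of the greedy construction rather than an intrinsic property of the Gaussian approximation. If the proposition is read as a statement over arbitrary information-sets, one first relabels the bit-channels in order of increasing BER, which is without loss of generality since any throughput-maximizing information-set of size $K$ is a prefix of this ordering.
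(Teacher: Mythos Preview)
Your proof is correct and follows essentially the same route as the paper's: both compute the discrete derivative (the paper uses the forward difference $\Delta\eta_K = \tfrac{1}{NB}\bigl[1-(K+1)v_{K+1}\bigr]\prod_{i=1}^{K}(1-v_i)$, you use the ratio $\eta_{K+1}/\eta_K=(1+1/K)(1-v_{K+1})$) and argue that its sign changes at most once. Your version is in fact a bit tighter, since you make explicit that the single sign change relies on the sorted ordering $v_1\le v_2\le\cdots$ produced by Function~\ref{Algorithm21}, a point the paper's proof leaves implicit when it asserts that the threshold $K=\tfrac{1}{v_{K+1}}-1$ is crossed only once.
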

\begin{proof}
The first forward difference can be given as
\begin{equation}
\label{FERsplit1}
\begin{split}
&\Delta \eta_K=\eta_{K+1}-\eta_K =\frac{1}{NB}\big[1-(K+1)v_{K+1}\big]\prod_{i=1}^{K}(1-v_i).
\end{split}
\end{equation}
Since $\Delta \eta_K>0$ for $K<\frac{1}{v_{K+1}}-1$ and $\Delta \eta_K<0$ for $K>\frac{1}{v_{K+1}}-1$, the difference equation has only one root, so $\eta_K$ has only one maxima. Thus, (\ref{throughput}) is unimodal.
\end{proof}

The throughput vs. the code rate for polar codes with length 4096 at SNRs of -2, 0 and 2 is shown in Fig.~\ref{fig:Throughput_change}. From Proposition~\ref{Lemma1}, we know $P_{K}>P_{K-1}$. Thus, the throughput initially grows (nearly linearly for the FER-minimal codes) with the code rate until the rate gets sufficiently high that the effects of the FER start to dominate in (\ref{throughput}), after which point the throughput drops dramatically. The existence of an optimal rate to maximize the throughput is clear which is shown in Proposition~\ref{Theorem55}.
\begin{figure}[b]
  \center
   \includegraphics[width=0.5\textwidth]{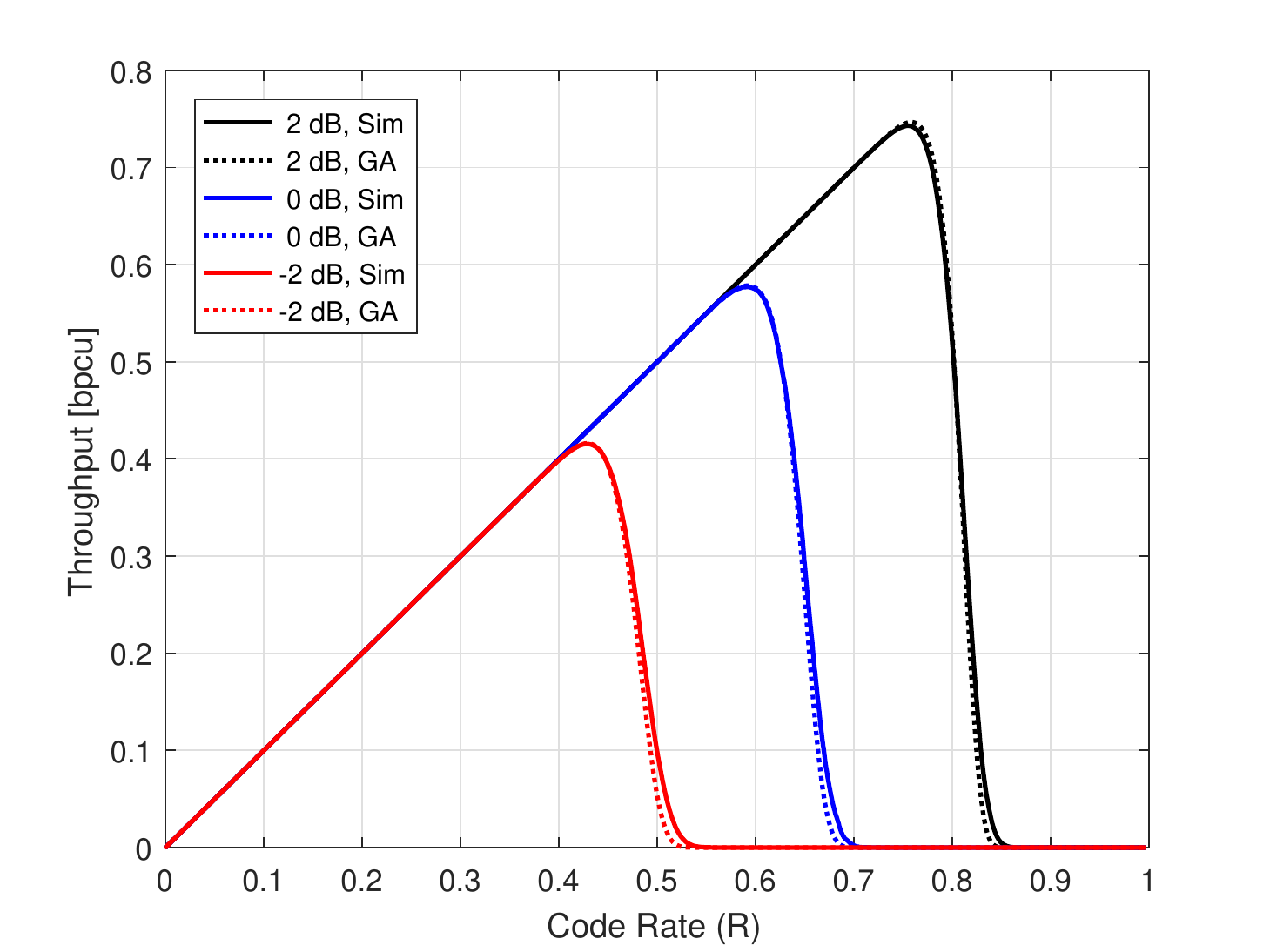}
\caption{Throughput of NC protocols vs. the code rate with $N=4096$ bits at different SNRs for BPSK with SCD.}
\label{fig:Throughput_change}
\end{figure}

\subsection{Code Design for Chase-combining}

To design the code for the CC HARQ schemes, the code design method for NC should be modified since the average LLR with each retransmission increases to $\frac{4l}{N_0}$, where $l \in \mathbb{N}$ is the retransmission attempt number. In this case, the procedure of the design includes running the Function GA-BER in Appendix~\ref{sec:appendix} to estimate the order of bit-channels for the first transmission since the majority of codewords are decoded correctly in the first step. To find the rate corresponding to the maximum throughput, the FERs using the Function~GA-BER for each retransmission are estimated and (\ref{throughputcc}) is computed. For the stopping criterion, one can use a threshold on the probability of error of the maximum message length, e.g., $P_L^N<10^{-6}$. However, this criterion converges slowly. As a fast criterion, the lack of change of the maximum throughput is employed in this paper. The code design for CC HARQ is described formally in Function~\ref{Algorithm33} in Appendix~\ref{sec:appendix}. To design the code, we should call Function~\ref{Algorithm33}($\gamma$,$N$). The codes designed for CC have slightly higher rates than the codes designed for NC.

\subsection{Code Design for Incremental Redundancy}
\label{polarcodedesignSCDIR}

CC has the advantage over NC that the received sample from all the retransmissions are used when making decisions. IR schemes have the additional advantages that each retransmission does not have to be of the same length, and can belong to an extended code instead of merely repeating code bits. A very good incremental redundancy scheme based on punctured polar codes was recently proposed in \cite{Yuan2018}. In this scheme, a family of $L$ polar codes of lengths $N_1<N_2<...<N_L$ are designed for a fixed message word of $K$ bits. By using quasi-uniform puncturing \cite{Niu2013} along with careful repetition of message bits at the encoder and some minor modifications to the polar decoder (see \cite{Yuan2018} for details), these codes are rate-comparable. The throughout-maximizing design technique presented here can easily be extended to this IR scheme. Although better performance can be achieved by using a small incremental transmission size (e.g., one bit at a time), to simplify the scheduling and to provide a meaningful comparison to the NC and CC schemes, in the following we consider an increment of $N$ bits (i.e., $N_l=lN$). The design procedure is the same as described as in Function~\ref{Algorithm33}, except $\bar{\lambda}=4\times 10^{\gamma/10}$ is used in line 3 and $v=$GA-BER($\bar{\lambda},N_l$) is used in line 4 and the BER sorting is repeated for all retransmissions.

A comparison of the throughput of NC, CC and IR protocols is provided in Fig~\ref{fig:Throughput_changeNCCCIR}. The IR scheme is designed for $L=4$. Observe that the IR code achieves the throughput of NC in the first transmission since the information set of codes are the same, and the throughput of retransmissions shows a higher peak.

\begin{figure}[t]
  \center
   \includegraphics[width=0.5\textwidth]{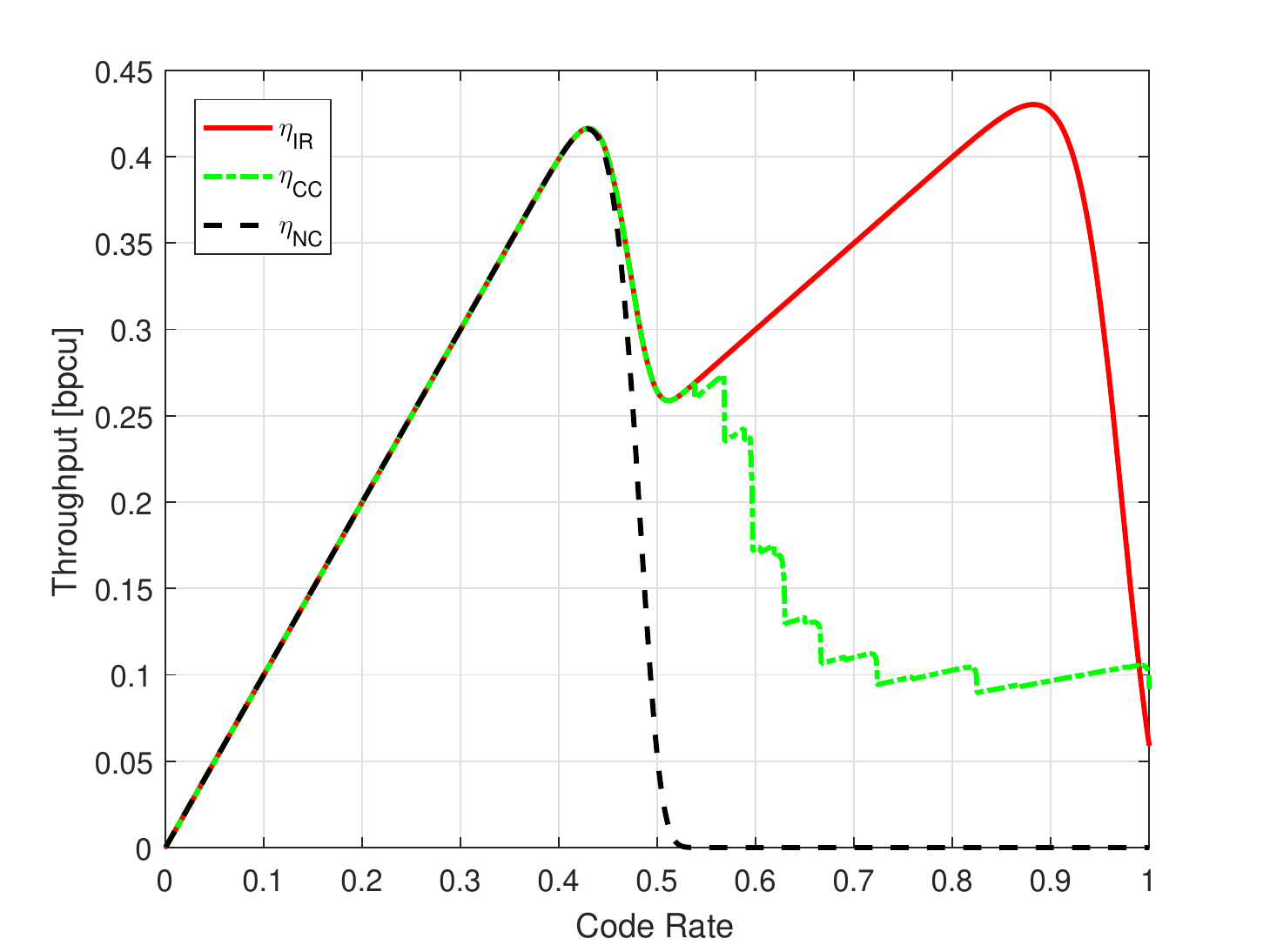}
\caption{Throughput comparison of NC, CC, and IR protocols vs. the code rate with $N=4096$ bits and SCD for BPSK at $\gamma=-2$ dB.}
\label{fig:Throughput_changeNCCCIR}
\end{figure}

\section{Polar Code Design Methods for QAM}
\label{polarcodedesignQAM}

To design MLPCM for PAM and QAM constellations, the GA method can be employed. In this section, we explain the design steps of MLPCM and provide algorithms  to maximize the throughput of HARQ schemes.

\subsection{GA-based Code Design for SCD}
\label{polarcodedesignSCQAM}

To construct polar codes using GA, the average LLR for each binary channel of a QAM constellation is estimated and an independent binary polar code is constructed for each level. Due to the Gaussian noise, the LLR distribution for a BPSK constellation in an AWGN channel is $\mathcal{N}(\frac{4}{N_0},\frac{8}{N_0})$. However, the LLR distributions of each level of a multilevel coding scheme for PAM and QAM constellations are not known. The average LLR of QAM with $2^B$ points can be estimated using the average LLRs of the constituent PAM with $2^{B/2}$ points based on the transform described in Theorem~\ref{Theorem11}. In this theorem, we indeed use a linear block code to estimate the LLRs of QAM with 2D SPM from LLRs of PAM. This linear block code is similar to the second stage of a polar encoder and hence, a modified GA can be employed to estimate the average LLRs of QAM. The corresponding average LLR estimation for $b=1,...,B/2$ can be expressed as
\begin{equation}
\label{PAMLLRn1}
\begin{split}
& \bar{\lambda}_{\text{QAM},2b-1}=\phi^{-1} (1-[1-\phi(\bar{\lambda}_{\text{PAM},b})]^2), \\
& \bar{\lambda}_{\text{QAM},2b}=2\bar{\lambda}_{\text{PAM},b}.
\end{split}
\end{equation} 

To estimate the average LLRs of a PAM constellation, we use the piecewise linear estimation in Proposition~\ref{Theorem1}. Consequently, since the LLR is a linear function of $y$, the average LLR of each level of PAM, given a zero is transmitted, can be computed through integration. For the first level of PAM, the integral is given by
\begin{equation}
\label{avgLLRint}
\bar{\lambda}_{1}=\sum_{d | x_d \in \mathcal{X}_{1}^0} \int_{\Omega_d}\Pr(x_d) \lambda_1(y) \frac{1}{\sqrt{\pi N_0}} e^{\frac{-(y-x_d)^2}{N_0}}dy, 
\end{equation}
where  $\Pr(x_d)$ is the probability of the transmission of $x_d$ within $\mathcal{X}_{1}^0$. For example for 4-PAM, $\Pr(x_d)=0.5$ and for 8-PAM, $\Pr(x_d)=0.25$. For other levels, the integration is taken over the corresponding set $\mathcal{X}_{n}^0$. Therefore, the general form of (\ref{avgLLRint}) can be written as
\begin{equation}
\label{avgLLRint2}
\bar{\lambda}_{n}=\sum_{d | x_d \in \mathcal{X}_{n}^0} \int_{\Omega_d}\Pr(x_d) \lambda_n(y) \frac{1}{\sqrt{\pi N_0}} e^{\frac{-(y-x_d)^2}{N_0}}dy, 
\end{equation}
where $\lambda_n(y)$ is given by the LLR approximation in (\ref{maxlogPAM}). As a numerical example, for 4-PAM and 8-PAM, the average LLRs of  the binary channels are $[6.3,31.9]$ and $[0.7,6.0,30.5]$ at $\gamma=10$ dB, respectively. 

In \cite{Trifonov2012}, it is noted that the FER of a MLPCM can be approximated with  $1-\prod_{n=1}^{B}(1-P_n)$. By extending this bound, the total FER of the MLPCM can be approximated as
\begin{equation}
\label{FERbound1}
P_{K}=1-\displaystyle \prod_{n=1}^{B} \prod_{i=1}^{K_n}(1-v_n^i),
\end{equation}
where $v_n^i$ is the BER of the $i^{th}$ bit-channel in $n^{th}$ level given previous bit-channels are frozen.

To find the level code rates, \cite{Trifonov2012} suggests using the equal error probability rule in which all levels of a MLPCM have approximately the same FER. However, this requires solving a program to find code rates. From (\ref{FERbound1}), one can observe that the MLPCM works like a longer single binary polar code while it observes a variety of equivalent SNRs corresponding to different levels. Thus, to determine level code rates, the bit-channel reliabilities $v_n^i, \forall i=1,...,N$ can be measured for all levels $n=1,...,B$ and among them, those with the lowest genie-aided BERs are chosen as the total information-set \cite{Seidl2013-1}. This automatically determines the rate of each level since some bit-channels of each level are in the total information-set. When we use this rule to design the code, the FERs of different levels are very close. However, this rule does not require solving any program to determine the code rates which highly simplifies the MLPCM design. The entire code design procedure for the NC-D protocol is mentioned in Algorithm~\ref{Algorithm5} provided in Appendix~\ref{sec:appendix}.

An algorithm for designing throughput optimal codes for NC-I protocol is presented in Algorithm~\ref{Algorithm7} in Appendix~\ref{sec:appendix}. Due to the independence of levels, the algorithm designs a binary code for each level independently. The same algorithm can be developed for the CC-I protocol by substituting the Function~\ref{Algorithm33} instead of the Function~\ref{Algorithm21} in Algorithm~\ref{Algorithm7}. Similar procedure can be used to design the IR-I protocol based on the code design method explained in Section~\ref{polarcodedesignSCDIR} for BPSK.  Due to the space limitation, we do not mention the design algorithm for the CC-D and IR-D protocols but it is only the extension of Function~\ref{Algorithm33} and Algorithm~\ref{Algorithm5}.

\section{Rate Matching Algorithm for SCLD}
\label{polarcodedesignSCL}

For decoding of each output bit with SCD, the information of other previously decoded bits and the future information bits are not used. To overcome these shortcomings, the SCLD records a list containing different possible decoded message words and keeps only $L$ most likely ones after each step \cite{Tal2015}. A CRC sequence is usually added to message bits when SCLD is used, to increase the probability of finding the most likely message word. Throughout this paper for SCLD, only one CRC sequence is used for both list decoding and ARQ. Typically, the codes designed for SCD are used for SCLD as well since the SCL core decoder is SCD. However, these codes are suboptimal for SCLD. 

When throughput-maximizing codes optimized for SCD are used with SCLD, the FER is lower than the FER of SCD. Even though this slightly improves the throughput, it is not highly effective  on the term $(1{-}P_{K})$ in (\ref{throughput}). However, since $R$ is numerically more dominant in (\ref{throughput}) when the FER is small, it can be increased more significantly to improve the throughput. Therefore, we introduce a rate matching algorithm for SCLD. The algorithm employs the golden-section search method to find the code rate corresponding to the maximum throughput. 

The golden-section search method iteratively measures the objective function at different points and updates the answer range interval $[a,b]$ until this interval is narrowed down around the final value of the decision variable. For a detailed explanation of the golden-section search method refer to \cite{Thisted1988} and references therein. Here, the objective function is the actual throughput of SCLD measured using simulation and the decision variable is the message word length. The proposed algorithm is fast, e.g., for $N=16384$ it finds the optimum rate in around 15 iterations, corresponding to 16 evaluations of the objective function. For initialization of the algorithm, we use $a=K_{\text{SCD}}$, the message word length for SCD, and $b=\text{min}(a+\frac{BN}{10},BN)$. The algorithm is formally presented in Function~\ref{Algorithm4} in Appendix~\ref{sec:appendix}. Note than when we use the algorithm for the NC-D protocol, we should call Function~\ref{Algorithm4}($\gamma$,$K$,$N$,$\textbf{idx}_{\text{tot}}$,$B$) to repeat the simulation for all $B$ binary channels of a constellation. However, for the NC-I protocol we call  \ref{Algorithm4}($10\log_{10}\frac{\bar{\lambda}_n}{4}$,$K_n$,$N$,$\textbf{idx}_n$,1) for each binary channel $n=1,...,B$, independently. 

A comparison of throughput vs. the code rate for SCD and SCLD at different lengths is presented in Fig.~\ref{fig:th3scld}. Observe that the rate-matched codes for SCLD substantially improve the throughput for all code lengths. Furthermore, as we increase $N$, the throughput tends to the capacity at which the throughput and the code rate of the throughput-maximizing code eventually are equal.  

\begin{figure}[t]
\center
    \includegraphics[width=0.5\textwidth]{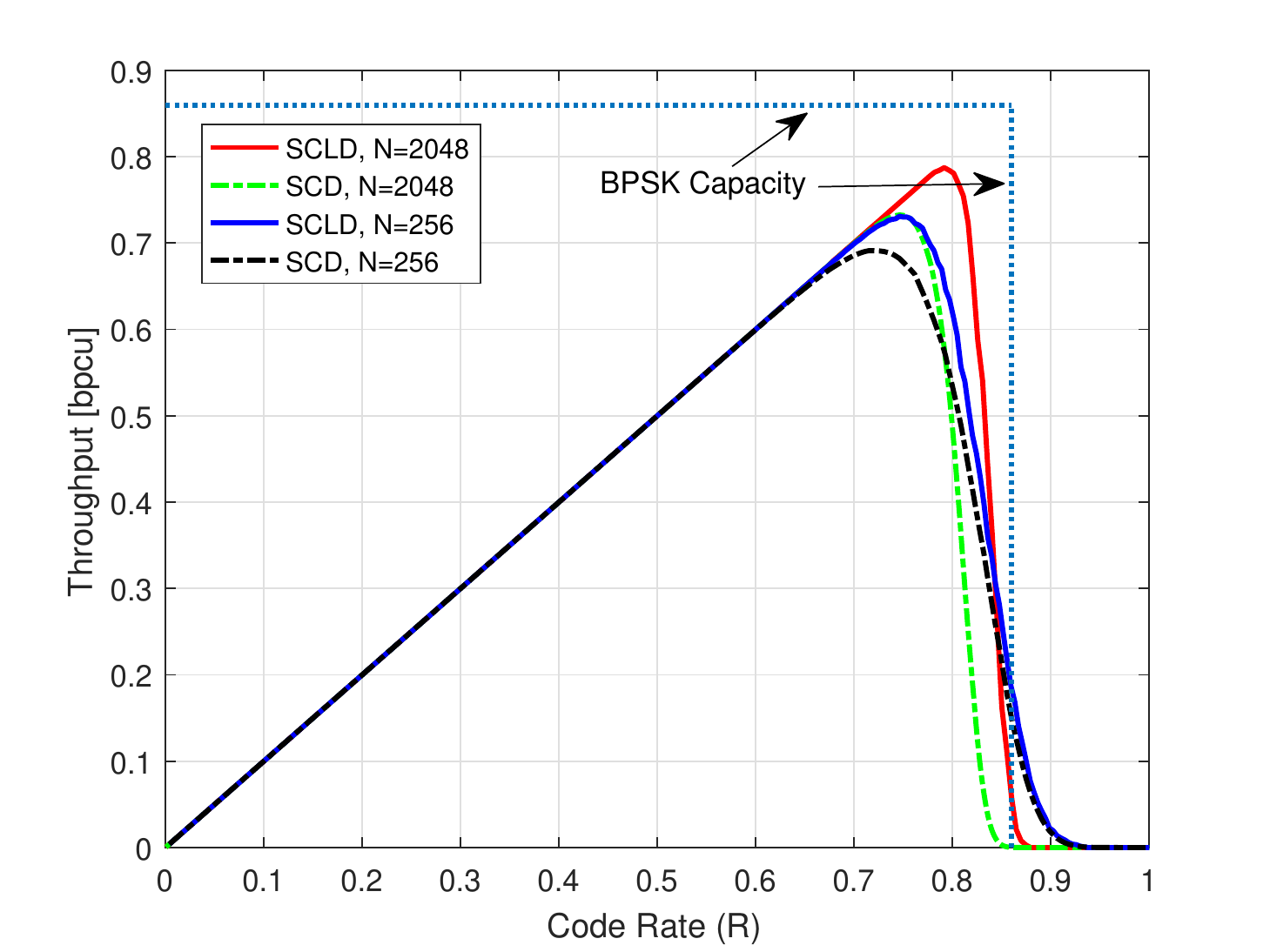}
\caption{Throughput of NC protocol vs. the code rate with different code lengths at $\gamma=2$ dB for BPSK.\hfill \vspace{0.4cm}}
\label{fig:th3scld}
\end{figure}

\section{Numerical Results and Discussions}
\label{sec:num}

In this section, we provide the performance analysis of the code design algorithms described in Sections \ref{polarcodedesign} and \ref{polarcodedesignQAM}, respectively. The system described in Section \ref{sec:sysmodel} is used for all simulations and the CRC sequence is CRC-16-CCITT. 

The throughput of SC- and SCL-decoded polar codes of length 4096 vs. SNR ($E_s/N_0$) is shown in  Fig.~\ref{fig:result2} in comparison to BPSK capacity. The SCLD list size is 32 for all curves. The lowermost black curve shows the throughput of the polar code designed using SCD and decoded with SCD that achieves the throughput of 80\%  of the capacity at 0 dB. The second black curve is the throughput of the code designed for SCD and decoded using SCLD which achieves  82.5\%  of the capacity at 0 dB. The topmost curve under the capacity shows the performance of the code designed for SCD and rate matched for SCLD   which achieves the 89.3\% of the capacity at 0 dB. Therefore, the use of SCLD for decoding of codes designed for SCD does not change the throughput substantially in comparison to SCD. However, employing  Function~\ref{Algorithm4} for the rate matching can substantially improve the throughput of the code used with SCLD.

\begin{figure}[b]
\center
    \includegraphics[width=0.5\textwidth]{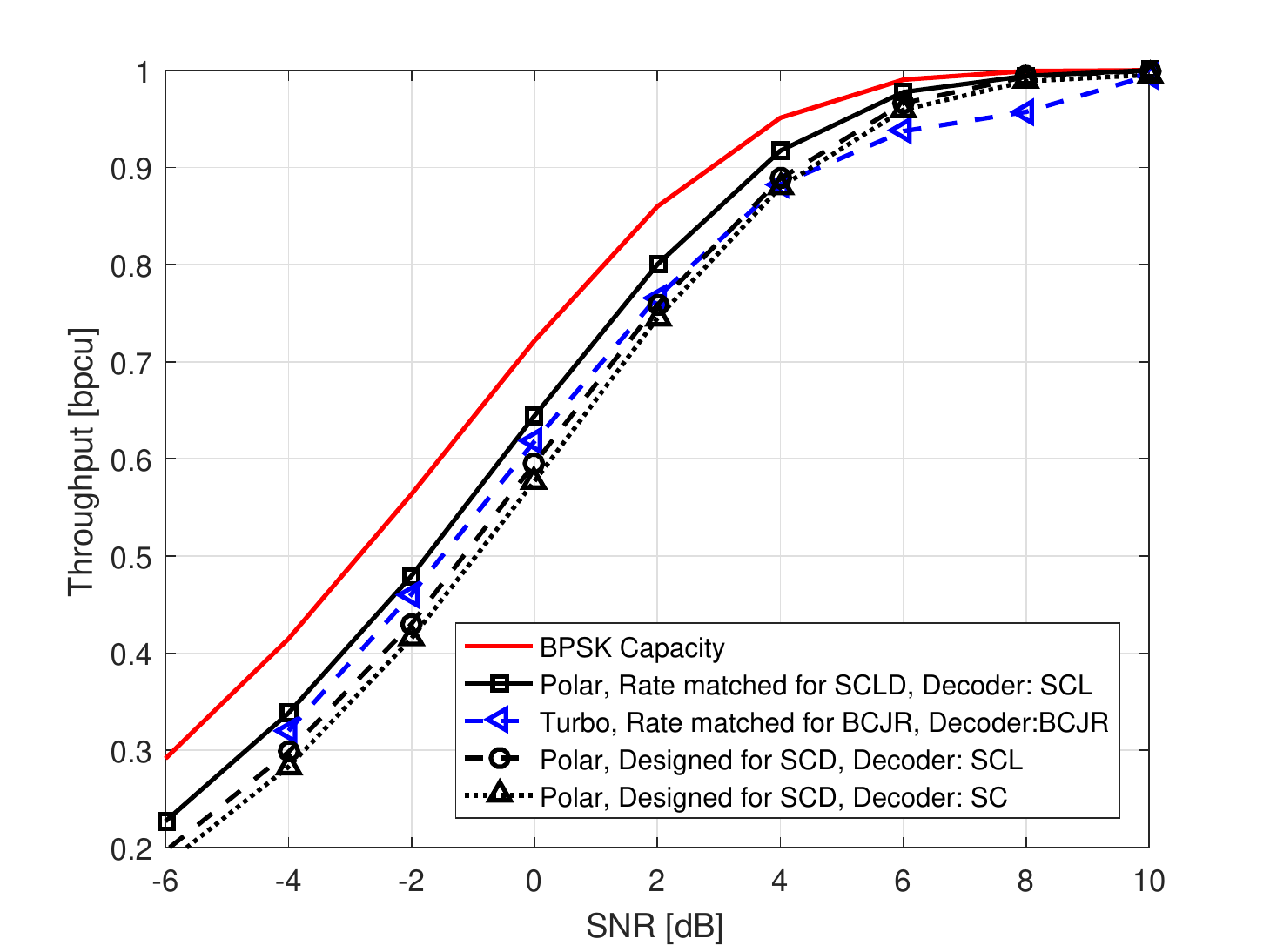} 
\caption{Throughput comparison of polar codes of length 4096 optimized for SCD decoded with SCD and SCLD, and the SCLD rate matched codes decoded with SCLD.\hfill \vspace{0.3cm}}
\label{fig:result2}
\end{figure}

Fig.~\ref{fig:result2} furthermore provides a comparison of the throughput-maximizing polar codes and parallel concatenated (turbo) codes employed in LTE-A \cite{ETSITS136212}. The BCJR decoder  with 5 iterations is employed for decoding of the turbo codes with codeword lengths of around 4096. The LTE-A turbo code rate is optimized using the golden-section search (similar to Function SCLD-Rate-Match but the BCJR is employed) to maximize the throughput. The range of message word lengths is limited to $40{:}8{:}512$, $528{:}16{:}1024$, $1056{:}32{:}2048$ and $2112{:}64{:}4200$ bits which provides us with 157 different choices for the code rate. In this case, a golden-section search is used to search all the possible 157 choices for the code rate and selects the code rate corresponding to the highest throughput. Due to code rate limitations, the optimization procedure was only applied in the SNR range between -4 and 10 dB. Furthermore, the turbo code lengths are slightly higher than 4096.  It can be observed that turbo code performance is close to the optimized polar code using SCLD-Rate-Match at low SNRs. However, as the SNR increases, the performance of turbo degrades and at high SNRs, it is even worse than the polar code optimized for SCD. Note that the complexity of BCJR with 5 iterations is more than that of SCLD.

The comparison of the CC HARQ  schemes proposed in \cite{Chen2014} for SCD and \cite{Liang2017} for SCLD and the CC HARQ design method introduced in Function~\ref{Algorithm33} is provided in Fig.~\ref{fig:result5} for BPSK. To construct the polar codes using algorithms provided in \cite{Liang2017} and \cite{Chen2014}, the $K$ at each SNR is considered the same as  $K$ found using Function~\ref{Algorithm33}. The results indicate that the methods proposed in \cite{Liang2017} and \cite{Chen2014} with at least $N-K$ times more complexity, cannot generate codes better than the Function~\ref{Algorithm33}. 

In addition, in Fig.~\ref{fig:result5}, the comparison of IR codes with CC and NC is also provided. To construct IR codes, the method described in Section~\ref{polarcodedesignSCDIR} is used. The NC codes optimized using Function~\ref{Algorithm21} have approximately the same performance as the codes generated for the CC using Function~\ref{Algorithm33} for situations where we can adapt the code when the SNR changes. In this case, CC does not have any sensible advantage over NC since when we design the codes for NC or CC, the algorithms try to minimize the number of retransmissions and keep it in order of at most one to maximize the throughput.  Observe that IR code outperforms NC with SCD and SCLD by 4.53\% and 3.82\% of the BPSK capacity at $\bar{\gamma}=-4$ dB, respectively. At moderate-to-high SNRs, the IR code cannot improve the throughput anymore.

\begin{figure}[b]
\center
    \includegraphics[width=0.5\textwidth]{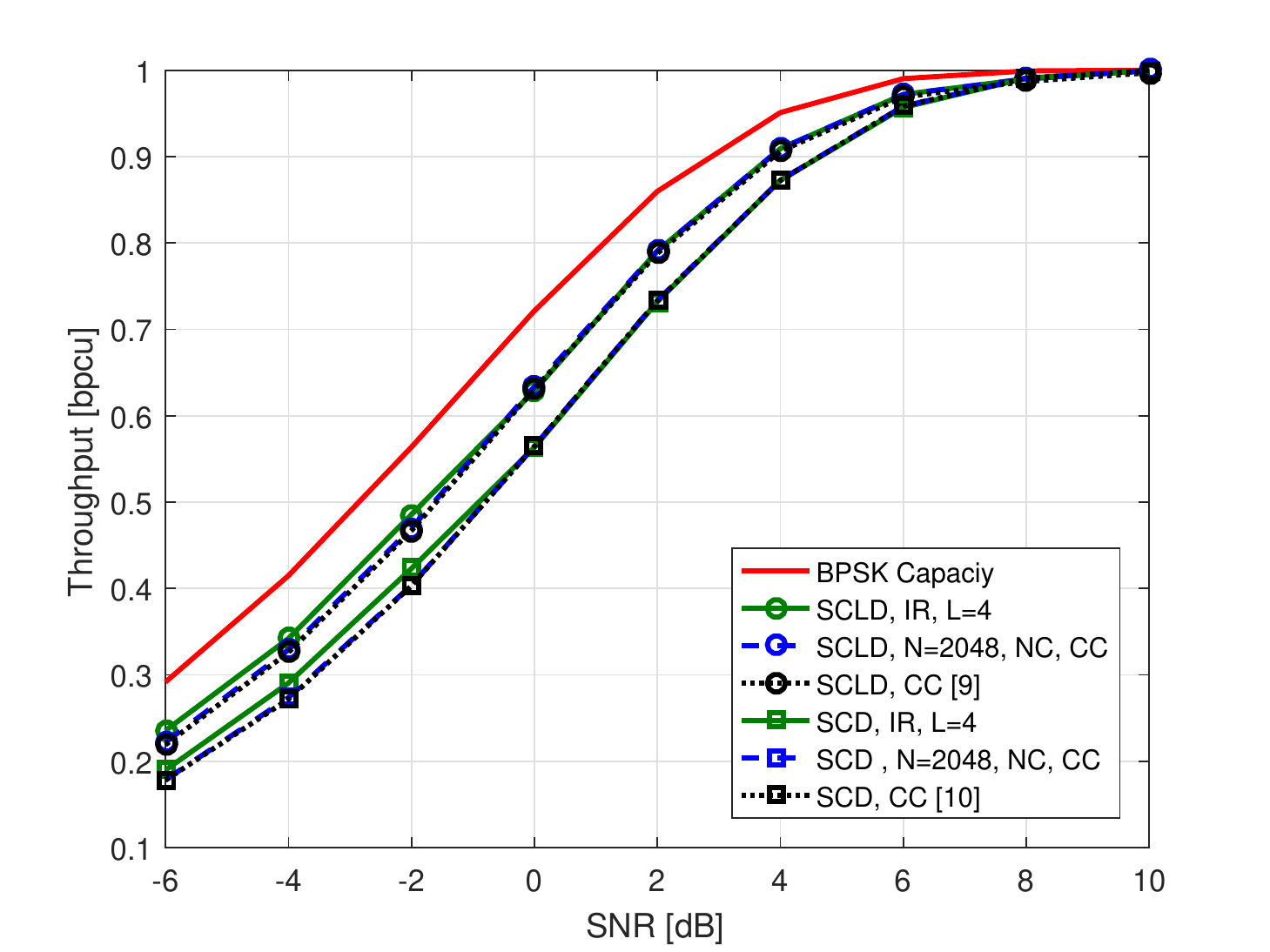}
\caption{Comparison of proposed NC, CC, and IR codes and code design methods in \cite{Chen2014} and \cite{Liang2017} for SCD and SCLD, respectively.\hfill
\vspace{0.3cm}}
\label{fig:result5}
\end{figure}

The comparison of NC-I and NC-D protocols with SCD and SCLD  is presented in Fig.~\ref{fig:result7} for 16-QAM. At all SNRs, NC-I and NC-D with SCLD perform slightly better than the corresponding protocol with SCD. At 4 dB, NC-D with SCD and SCLD achieves 69\% and 73\% of the capacity and NC-I with SCD and SCLD achieves 74\% and 84\% of the capacity, respectively. For all SNRs, the throughput of the code is essentially identical regardless of whether the exact LLR calculation of (2) or the simplified LLR calculation of Section~\ref{LLRSimple} is used. This is despite the fact that the simplified LLR approximation is not very exact at low SNR.

\begin{figure}[t]
\center
    \includegraphics[width=0.5\textwidth]{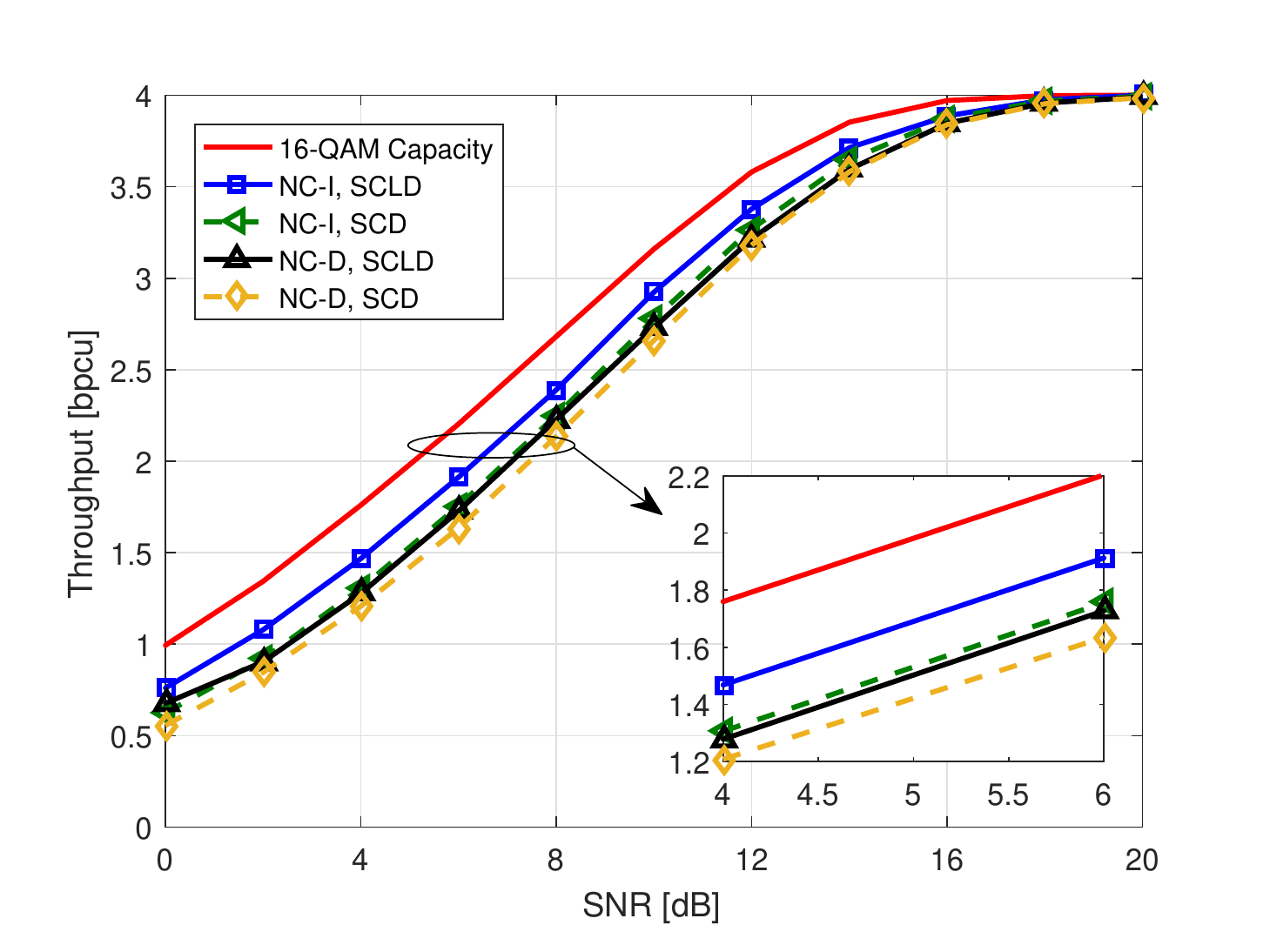}
\caption{Throughput comparison of  NC-D and NC-I protocols with SCD and SCLD for $N_{\text{tot}}=2048$ and 16-QAM.\hfill
\vspace{0.3cm}}
\label{fig:result7}
\end{figure}

In all previous results, we assumed we can adaptively change the code and the modulation. In those cases, CC HARQ protocols do not provide any advantage over NC protocols. However, once the number of codes is limited, the advantage of CC HARQ is highlighted. In Fig.~\ref{fig:result8}, the performance of CC-I protocol with two codes designed at 4 dB and 14 dB is compared with the IR HARQ and CC HARQ schemes constructed using polar codes and BICM in \cite{Elkhami2015} for 16-QAM. We observe that when we use the code designed at 14 dB for the whole range of SNR, IR HARQ is better than the proposed throughput-optimal codes for CC-I protocol at a few SNRs. However, on average the proposed MLPCM with CC-I protocol performs up to 3 dB better than IR HARQ and CC HARQ in \cite{Elkhami2015}. In case we use one more code designed at 4 dB, CC-I at all SNRs achieves higher throughput than schemes proposed in \cite{Elkhami2015}. The reason for this superiority is the good design of MLPCM scheme in conjugation with CC-I scheme. We also compared the MLPCM scheme with IR-I HARQ constructed based on the method explained in Section~\ref{polarcodedesignSCDIR}. Observe that it is only slightly better than CC at low SNRs.

\begin{figure}[t]
\center
    \includegraphics[width=0.5\textwidth]{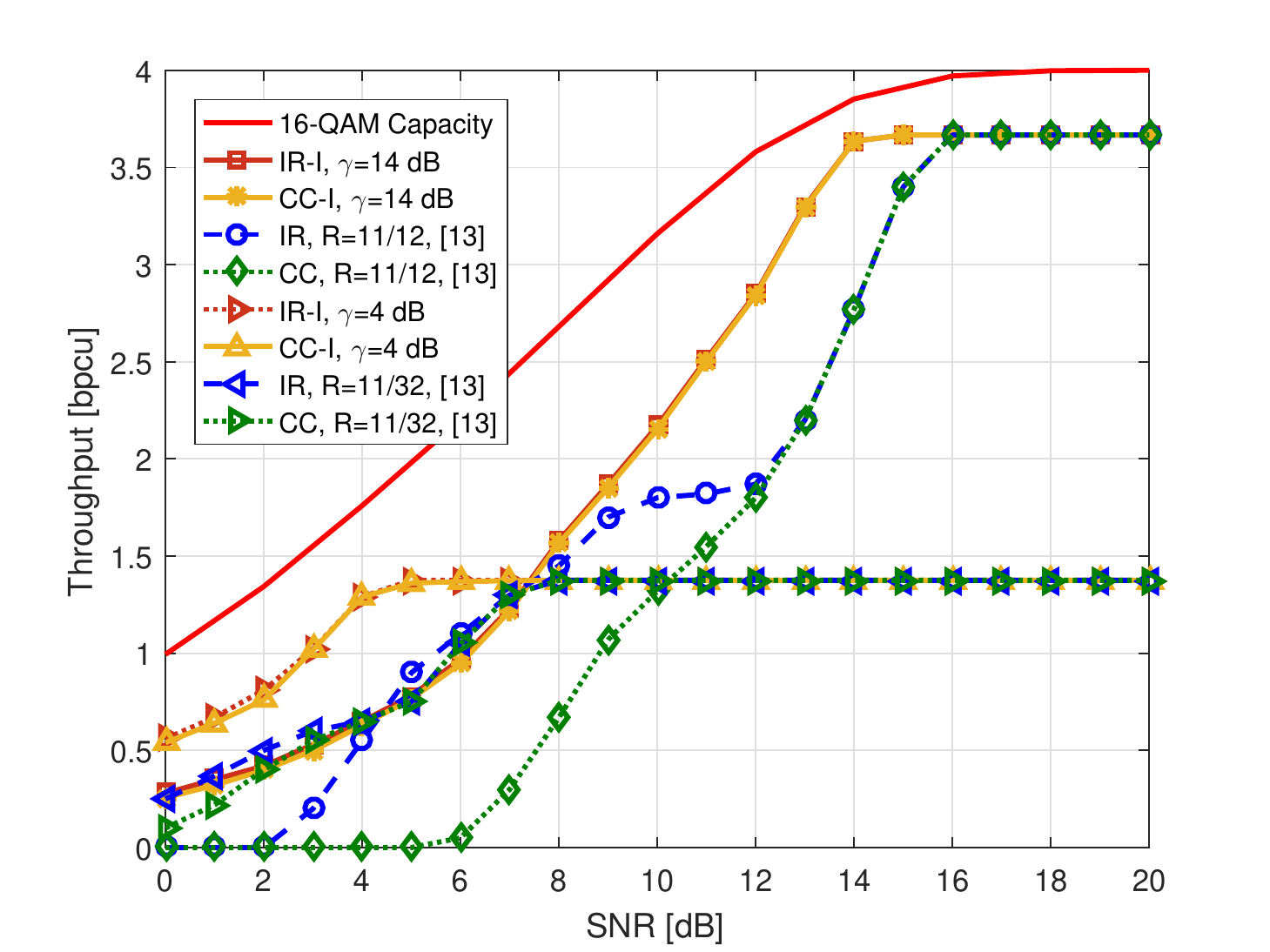}
\caption{Comparison of CC-I with two codes and the IR and CC HARQ proposed in \cite{Elkhami2015} for 16-QAM.\hfill
\vspace{0.3cm}}
\label{fig:result8}
\end{figure}

\section{Conclusion}
\label{sec:Conclusion}

In this paper, we simplified the accurate LLR estimation of QAM constellations with 2D SPM using a simple linear relationship and two PAM LLR estimators and we proposed a method for generating a SPM for QAM constellations with an even number of address-bits. We also simplified the LLR estimation of PAM constellations with SPM. Furthermore, we proposed a set of algorithms for designing MLPCM by maximizing the throughput for NC and CC HARQ schemes. The numerical results show the codes constructed using these methods perform very close to the capacity, e.g., within 1.2 dB of the 16-QAM capacity with $N=2048$ bits. The results indicate an improvement up to 15\% of the capacity at 4 dB when we use level-independent protocols and rate-matched SCLD. Due to the substantial throughput improvement that can be achieved using carefully designed codes, the idea of throughput-based polar code design should be extended to designing optimal codes for IR HARQ systems in the future.

\appendices
\section{Algorithms to generate codes}
\label{sec:appendix}

The algorithms used in this paper to generate codes are mentioned in this section.

\vspace{23pt}

\begin{function}[!ht]
 \DontPrintSemicolon
     \KwIn{$\textbf{v}$ and $N$}
    \KwOut{\textbf{v}: The sorted indices of the information-set $\textbf{idx}$ and the message length $K_{\text{opt}}$}
  \nonl \textbf{Procedures:} [\textbf{v},\textbf{idx}] =Sort(\textbf{v}): Sorts the input vector in an increasing order and outputs the sorted indices $\textbf{idx}$ and sorted values $\textbf{v}$.\;
         [\textbf{v},\textbf{idx}]=Sort($\textbf{v}$) \Comment{Sort the bit-channels}\;
       \For {$\kappa=1:N$} 
       {
        $P_{\kappa}=1-\displaystyle  \prod_{i=1}^{\kappa}(1-v_i)$ \Comment{Estimate FER for all $K$s}\;
        ${\eta}_\kappa=(1-P_{\kappa})\frac{\kappa}{N}$ \Comment{Estimate $\eta_\kappa$}\;
       } 
      $K_{\text{opt}}= \displaystyle \argmax_{\kappa} \eta_\kappa$ \Comment{Find $K$ with highest $\eta_\kappa$}\;
       $R=\frac{K_{\text{opt}}}{N}$ \;
       return $\textbf{idx}$, $K_{\text{opt}}$\;
           \caption{NC-Binary-Design($\textbf{v}$,$N$)}
        \label{Algorithm21}
\end{function}

\vfill\null

\begin{function}[!ht]
 \DontPrintSemicolon
    \KwIn{$\bar{\lambda}$ and $N$}
    \KwOut{\textbf{v}: The vector of the BERs of bit-channels}
    $\bar{\lambda}_{1,1}=\bar{\lambda}$\;
    \For {$m=1:\log_2{N}$ } 
    {
    \For {$i=1:2^{m-1}$} 
    {
    $\bar{\lambda}_{2i-1,m+1}=\phi^{-1} (1-[1-\phi(\bar{\lambda}_{i,m})]^2)$\;
    $\bar{\lambda}_{2i,m+1}=2\bar{\lambda}_{i,m}$\;    
    }
    }
    \For {$\kappa=1:N$} 
    {
    $v_{\kappa}=Q\Big(\sqrt{\frac{\bar{\lambda}_{\kappa,\log_2(N)+1}}{2}}\Big)$
    }
           return $\textbf{v}$\;
    \caption{GA-BER($\bar{\lambda}$,$N$) (From \cite{Vangala2015}, modified)}
        \label{Algorithm2}
\end{function}

\begin{function}[!ht]
     \SetKwInOut{Input}{Input}
     \SetKwInOut{Output}{Output} 
     \Input{$\gamma$ [dB] and $N$}
    \Output{The maximum throughput code and the message length $K_{\text{SCD}}$}
 \DontPrintSemicolon   
   $\eta^{\prime}=1$, $\eta=0$, $\bar{N}_{0}^\kappa=0$, $P_{0}^{\kappa}=0$,$l=1$\;
   \While{$\eta^{\prime}\neq\eta$}
       { 
       $\bar{\lambda}=4l \times10^{\gamma/10}$\;
       $\textbf{v}$=GA-BER($\bar{\lambda}$,$N$)\Comment{Determine  BERs}\;
       \If{$l=1$}
       {
       [\textbf{v},\textbf{idx}]=Sort($\textbf{v}$) \Comment{Sort the bit-channels} \;
       }\Else{
        $\textbf{v}=\textbf{v}(\textbf{idx})$  \;
      }
       \For {$\kappa=1:N$}
       {
        $P_{l}^{\kappa}=1-\displaystyle  \prod_{i=1}^{\kappa}(1-v_i)$ \Comment{FER estimation}\;
        $\bar{N}_{l}^\kappa= \bar{N}_{l-1}^\kappa + \displaystyle N\prod_{l^{\prime}=1}^{l-1}P_{l^{\prime}}^{\kappa}$ \Comment{Effective code length} \;
        $\bar{\eta}_\kappa=\frac{\kappa}{\bar{N}_{l}^\kappa}(1-P_{l}^{\kappa}) $ \Comment{Throughputs}\;
       }
      $K_{\text{opt}}= \displaystyle \argmax_{\kappa} \bar{\eta}_\kappa$ \Comment{Find $K$ with  highest $\eta_\kappa$}\;
      $\eta^{\prime}=\eta$\;
     $\fontdimen16\textfont2=5pt\eta=\bar{\eta}_{K_{\text{opt}}}$ \Comment{Find  highest throughput} \fontdimen16\textfont2=3pt\;
      $l=l+1$ \Comment{Increment retransmission number}\;
      }
      $R_{\text{SCD}}=\bar{\eta}_{K_{\text{opt}}}$ \;
       return $\textbf{idx}$, $K_{\text{SCD}}$\;
       \caption{CC-Binary-Design($\gamma$,$N$)}
    \label{Algorithm33}
\end{function}

\begin{algorithm}[!ht]
 \DontPrintSemicolon
     \SetKwInOut{Input}{Input}
     \SetKwInOut{Output}{Output} 
     \Input{$\boldsymbol{\bar{\lambda}}_{\text{PAM}}$ for PAM, B, and N}
    \Output{The maximum throughput codes and $K_{\text{SCD}}$}
      \nonl \textbf{Procedures:} AverageLLR($\boldsymbol{\bar{\lambda}}_{\text{PAM}}$): Calculates $\bar{\lambda}$ of a QAM based on (\ref{PAMLLRn1}) for a PAM constellation.\;
       $\boldsymbol{\bar{\lambda}_{\text{QAM}}}$=AverageLLR($\boldsymbol{\bar{\lambda}}_{\text{PAM}}$)\;
       \For {$n=1:B$}
       {
       $\textbf{v}_n$=GA-BER($\bar{\lambda}_{\text{QAM},n}$,$N$) \;
       }
        [$\textbf{idx}_{\text{tot}}$,$K$]=NC-Binary-Design([$\textbf{v}_1,\textbf{v}_2,...,\textbf{v}_B]$,$BN$)\;
        return $\textbf{idx}_{\text{tot}}$, $K_{\text{SCD}}$ \;
    \caption{GA-based design for NC-D with QAM}
    \label{Algorithm5}
\end{algorithm}

\begin{algorithm}[!ht]
 \DontPrintSemicolon
     \SetKwInOut{Input}{Input}
     \SetKwInOut{Output}{Output} 
     \Input{$\boldsymbol{\bar{\lambda}}_{\text{PAM}}$: The average LLR vector for a PAM and $N$}
    \Output{The maximum throughput codes  
    and the message length for the SCD}
       $\boldsymbol{\bar{\lambda}}_\text{QAM}$=AverageLLR($\boldsymbol{\bar{\lambda}}_{\text{PAM}}$)\;
       \For {$n=1:B$} 
       {
        $\textbf{v}_n$=GA-BER($\bar{\lambda}_{\text{QAM},n}$,$N$)\;
       [$\textbf{idx}_n$,$K_{\text{SCD},n}$]=NC-Binary-Design($\textbf{v}_n$,$N$)\;
        }
       return $\textbf{idx}_1$,..., $\textbf{idx}_B$, $\bf{K}_{\text{SCD}}$\;
    \caption{GA-based design for NC-I with QAM}
    \label{Algorithm7}
\end{algorithm}

\begin{function}[!ht]
 \DontPrintSemicolon
    \SetKwInOut{Input}{Input}
   \SetKwInOut{Output}{Output} 
     \Input{$\gamma$, $K_{\text{SCD}}$, $N$, $\textbf{idx}$, and $B$}
    \Output{The maximum throughput and the corresponding code rate for the SCLD}
  \nonl \textbf{Procedures:} {f($K$)=SCLD\_Throughput($\gamma$,$K$,$N$,$\textbf{idx}$,$B$): Employs SCLD to simulate and estimate the throughput of a MLPCM with a total message length of $K$, a level-code length of $N$, the indices $\textbf{idx}$ designed for SCD, a $2^B$-QAM constellation and $\gamma$.}\;
 \nonl      Golden ratio: $\rho = \frac{\sqrt{5}-1}{2}$\;
   $a=K_{\text{SCD}}$ \;
   $b=\text{min}(a+\frac{BN}{10},BN)$ \;
    $k_1= \floor{\rho a+ (1-\rho)b}$ \Comment{First point to test} \;
     $f(k_1)$ ={SCLD\_Throughput($\gamma$,$k_1$,$N$,$\textbf{idx}$,$B$)}\;
     $k_2= \floor{(1-\rho)a + \rho b}$\Comment{Second point to test}\;
     $f(k_2)$ = {SCLD\_Throughput($\gamma$,$k_2$,$N$,$\textbf{idx}$,$B$)}\;
     \While { $\abs{a-b} > 1$ }{
   \eIf {$f(k_1) > f(k_2)$ }
       {
       $b=k_2$, $k_2=k_1$, $f(k_2)=f(k_1)$\;
       $k_1= \floor{\rho a+ (1-\rho)b}$ \;
       $f(k_1)$ = {SCLD\_Throughput($\gamma$,$k_1$,$N$,$\textbf{idx}$,$B$)}\;
    }
    {
       $a=k_1$, $k_1=k_2$, $f(k_1)=f(k_2)$\;
       $k_2= \floor{(1-\rho)a + \rho b}$ \;
       $f(k_2)$ = {SCLD\_Throughput($\gamma$,$k_2$,$N$,$\textbf{idx}$,$B$)}\;
    }
      }
      return $k_1$\;
    \caption{SCLD-Rate-Match($\gamma$,$K_{\text{SCD}}$,$N$,$\textbf{idx}$,$B$)}
 \label{Algorithm4}
\end{function}

\ifCLASSOPTIONcaptionsoff
  \newpage
\fi

\end{document}